\newtheorem{theorem}{Theorem}
\newtheorem{fact}{Fact}
\newtheorem{lemma}{Lemma}
\newtheorem{corollary}{Corollary}
\newtheorem{observation}{Observation}
\newtcolorbox[auto counter]{mybox}[2][]{
	enhanced,
	colback=blue!5!white,
	colframe=blue!75!black,
	fonttitle=\bfseries,
	title=Box \thetcbcounter: #2,#1
}
\newcommand{\bE}{\mathbb{E}}
\newcommand{\TV}{\mathrm{TV}}
\begin{document}
\title{No universal purification in quantum mechanics}

\author{Zhenhuan Liu}
\thanks{ZL and ZD contributed equally to this work.\\\href{qubithuan@gmail.com}{qubithuan@gmail.com}\\ \href{mailto:du-zy23@mails.tsinghua.edu.cn}{du-zy23@mails.tsinghua.edu.cn}\\ \href{jense@zedat.fu-berlin.de}{jense@zedat.fu-berlin.de}\\ \href{cai.zhenyu.physics@gmail.com}{cai.zhenyu.physics@gmail.com}\\
\href{zwliu0@tsinghua.edu.cn}{zwliu0@tsinghua.edu.cn}}
\affiliation{Yau Mathematical Sciences Center, Tsinghua University, Beijing 100084, China}

\author{Zhenyu Du}
\thanks{ZL and ZD contributed equally to this work.\\\href{qubithuan@gmail.com}{qubithuan@gmail.com}\\ \href{mailto:du-zy23@mails.tsinghua.edu.cn}{du-zy23@mails.tsinghua.edu.cn}\\ \href{jense@zedat.fu-berlin.de}{jense@zedat.fu-berlin.de}\\ \href{cai.zhenyu.physics@gmail.com}{cai.zhenyu.physics@gmail.com}\\
\href{zwliu0@tsinghua.edu.cn}{zwliu0@tsinghua.edu.cn}}
\affiliation{Center for Quantum Information, Institute for Interdisciplinary Information Sciences, Tsinghua University, Beijing 100084, China}

\author{Jens Eisert}
\thanks{ZL and ZD contributed equally to this work.\\\href{qubithuan@gmail.com}{qubithuan@gmail.com}\\ \href{mailto:du-zy23@mails.tsinghua.edu.cn}{du-zy23@mails.tsinghua.edu.cn}\\ \href{jense@zedat.fu-berlin.de}{jense@zedat.fu-berlin.de}\\ \href{cai.zhenyu.physics@gmail.com}{cai.zhenyu.physics@gmail.com}\\
\href{zwliu0@tsinghua.edu.cn}{zwliu0@tsinghua.edu.cn}}

\affiliation{Dahlem Center for Complex Quantum Systems, Freie Universität Berlin, 14195 Berlin, Germany}

\affiliation{Helmholtz-Zentrum Berlin für Materialien und Energie, 14109 Berlin, Germany}

\author{Zhenyu Cai}
\thanks{ZL and ZD contributed equally to this work.\\\href{qubithuan@gmail.com}{qubithuan@gmail.com}\\ \href{mailto:du-zy23@mails.tsinghua.edu.cn}{du-zy23@mails.tsinghua.edu.cn}\\ \href{jense@zedat.fu-berlin.de}{jense@zedat.fu-berlin.de}\\ \href{cai.zhenyu.physics@gmail.com}{cai.zhenyu.physics@gmail.com}\\
\href{zwliu0@tsinghua.edu.cn}{zwliu0@tsinghua.edu.cn}}
\affiliation{Department of Engineering Science, University of Oxford, Parks Road, Oxford OX1 3PJ, United Kingdom}

\author{Zi-Wen Liu}
\thanks{ZL and ZD contributed equally to this work.\\\href{qubithuan@gmail.com}{qubithuan@gmail.com}\\ \href{mailto:du-zy23@mails.tsinghua.edu.cn}{du-zy23@mails.tsinghua.edu.cn}\\ \href{jense@zedat.fu-berlin.de}{jense@zedat.fu-berlin.de}\\ \href{cai.zhenyu.physics@gmail.com}{cai.zhenyu.physics@gmail.com}\\
\href{zwliu0@tsinghua.edu.cn}{zwliu0@tsinghua.edu.cn}}
\affiliation{Yau Mathematical Sciences Center, Tsinghua University, Beijing 100084, China}

\begin{abstract}

Many central tasks in fundamental physics and quantum information processing are possible only insofar as mixed quantum states can be made purer. 
In this work, we prove that the linearity and positivity of quantum mechanics impose general restrictions on quantum purification, unveiling a new fundamental principle of quantum information processing. 
We first establish that no quantum operation can transform a finite number of copies of an unknown quantum state or channel into an exactly pure output that depends non-trivially on the input, thereby ruling out an important form of universal purification in both static and dynamical settings. 
Building on this, we show that, upon relaxing the requirement of exact purity, one can establish quantitative sample-complexity lower bounds for approximate purification that hold for arbitrary physically allowed strategies, whose scaling matches the performance of purification-related tasks across several different areas of quantum information processing.
Moreover, this lower bound leads to a generalized standard quantum limit for learning arbitrary functions of a quantum state, greatly extending earlier results based on quantum Fisher information and revealing a deep connection between purification and quantum learning.
Extending this principle to other important settings, we establish, for the first time, an exponential sample-complexity lower bound for approximate pure dilation state preparation and a no-go theorem for approximate bosonic Gaussian state purification with passive Gaussian operations, establishing much more stringent limitations under practical operational constraints.

\end{abstract}
\maketitle


\section{Introduction}\label{sec:intro}

The fusion of quantum mechanics and information theory has unlocked unprecedented possibilities for information processing, enabling tasks that are impossible in the classical world and holding the potential for a revolutionary impact.
Yet, the fundamental principles of quantum mechanics impose strict boundaries on its capabilities.
Prominent examples include quantum cloning and deleting, tasks trivial in classical physics but forbidden by the linearity of quantum mechanics~\cite{Wootters1982,KumarPati2000deleting}, and quantum broadcasting, which is further limited by the positivity of quantum mechanics~\cite{barnum1996broadcasting,parzygnat2024vqb}.
Identifying such fundamental principles and the limitations they impose on operational tasks marks an essential route for shaping the understanding of the very nature of quantum mechanics as well as its technological potential.

The purification of noisy quantum systems, namely the reduction of quantum entropy in order to prepare a target pure system, is a fundamental primitive in quantum information science with both broad practical applications and deep theoretical significance. 
A standard and widely studied purification paradigm is to consume many copies of a noisy state or channel and concentrate their purity onto a single output copy~\cite{bennett1996distillation,cirac1999optimal,fiuraifmmode2004optimal}. 
On the practical side, such purification processes underlie a wide range of tasks, including quantum error correction and 
a number of variants of coherent quantum error
mitigation~\cite{terhal2015qec,cai2023qem,hugginsVirtualDistillationQuantum2021,koczorExponentialErrorSuppression2021}, quantum cooling~\cite{cotler2019cooling,zeng2021universal,andong2023linear}, and various resource distillation and concentration protocols~\cite{bennett1996distillation,Pan2001distillation,bravyikitaev,fang2020purification}. 
At the same time, purification also serves as a central conceptual framework in several foundational areas, including quantum Shannon theory~\cite{wilde2011classical}, resource theories~\cite{Horodecki2009entanglement,chitambar2019resource,liu19oneshot}, and quantum thermodynamics~\cite{Gour_2015}. 
Here, the number of input copies represents the cost of producing a pure output system, thus naturally characterizes the complexity of the task. 
A recurring theme in the literature is about analyzing the fundamental limitation of the number of input copies.

According to the relationship between the input and output of the purification process, purification tasks can be broadly divided into two categories.
The first category consists of tasks whose target output pure state or unitary is fixed and independent of the input. Prominent examples include entanglement distillation~\cite{bennett1996distillation} and magic state distillation~\cite{bravyikitaev}, where the usual target outputs are respectively a Bell state and a single-qubit $T$ state.
Limitations for such tasks originate from restrictions on the allowed physical maps, commonly referred to as free operations~\cite{liu19oneshot,fang2020purification,fang2022purification,Regula_2021,zang2025entanglement}, without which the task is trivialized by replacement maps that always output the desired target. 

\begin{figure}
\centering
\includegraphics[width=0.95\linewidth]{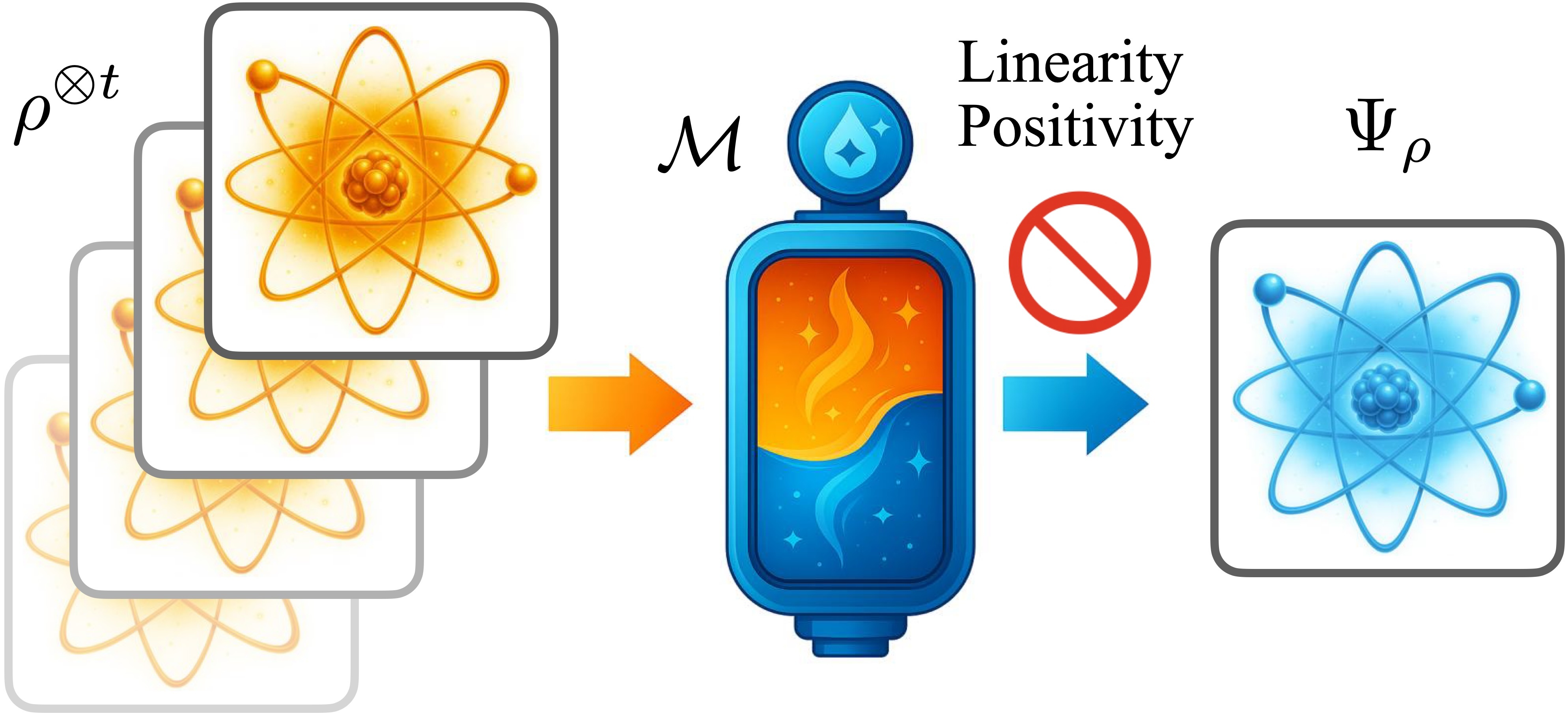}
\caption{The linearity and positivity of quantum mechanics fundamentally constrain the preparation of pure outputs that depend non-trivially on noisy inputs.
}
\label{fig:purification}
\end{figure}

The second and more fundamental category consists of input-dependent purification tasks, i.e., universal purification tasks, where the target output pure state or unitary itself depends on the input state or channel. Key examples include quantum purity amplification~\cite{li2024optimal,Childs2025streamingquantum} and quantum principal component analysis~\cite{Lloyd2014qpca}. Beyond these examples, input-dependent purification also fundamentally underlies notions of quantum learning and estimation. In particular, the preparation of a pure dilation state\footnote{In this work, we use ``pure dilation state'' to denote pure-state extension often referred to as ``purification'' in the literature, whereas ``purification'' represents a transformation process where the pure target state can be arbitrary.} of a mixed state $\rho$ using many copies of it, namely a global pure state whose reduced density matrix equals $\rho$, has important applications in learning nonlinear quantum properties~\cite{gilyen2019distributional,wang2023quantum,liu2024exponential,chen2024localtestunitarilyinvariant}. More generally, estimating physical properties of $\rho$ from many copies can itself be viewed as an input-dependent purification task, where the target pure output encodes the correct classical prediction of the target property.
Unlike the first category, the limitations of input-dependent purification tasks persist even without imposing any restrictions on the allowed operations. In other words, the obstruction is no longer simply a consequence of resource-theoretic constraints, and its physical and information-theoretic origin remains far from fully understood. Understanding the source of these limitations is therefore of fundamental importance, both for developing practical quantum purification protocols and for clarifying the ultimate limits of quantum learning.

In this work, we unveil a new physical principle: a general no-go principle on quantum purification, which rules out the possibility of purifying any finite copies of noisy states or channels into input-dependent pure states or unitaries.
This conclusion is fully protocol-independent: it applies to any physically admissible strategy, including sequential, coherent, and adaptive protocols.
With a short proof, we elucidate that this impossibility originates from two fundamental mathematical features of quantum mechanics---\emph{linearity} and \emph{positivity}---thus uncovering a new intrinsic limitation of quantum information processing, as demonstrated in Fig.~\ref{fig:purification}.
Beyond its theoretical significance, this limitation directly constrains the performance of practical quantum tasks. 
By relaxing the requirement of producing an exactly pure state to allowing nearly pure outputs, we derive a general sample complexity lower bound of approximate purification. 
The lower bound meets the performance of various tasks, including quantum purity amplification and quantum principal component analysis algorithms.
Importantly, the performance of approximate purification implies a standard quantum limit for estimating arbitrary functions of unknown quantum systems, greatly enhancing the former result derived using quantum Fisher information and showing the deep connection between quantum purification and learning.
Beyond these general conclusions, we show that these fundamental limitations become significantly more stringent when specific operational constraints are imposed in several important tasks.
For the preparation of a pure dilation state, we prove an exponential sample complexity lower bound.
For the purification of bosonic Gaussian states, we show the no-go result with passive Gaussian operations, even for approximate purification.

\section{No universal exact purification}

A universal quantum purification task involves finding a quantum process, represented by $\mathcal{M}$, that can map many copies of an arbitrary noisy input system into a pure state or unitary depending on the input.
At the same time, the quantum process, $\mathcal{M}$, is universal and independent of the input system.
The starting point of our framework can be summarized in the following observation, which very crisply showcases the fundamental obstructions to purification.

\begin{observation}[No universal exact purification]\label{obs:no-purification}
Let $t$ be any finite integer, and let 
$\mathcal{M}$ be a nonzero positive map which maps positive semi-definite matrices to positive semi-definite matrices. 
Suppose 
that for any state $\rho$ supported on a subspace of the space of all density matrices $\mathcal{D}_S \subseteq \mathcal{D}(\mathcal{H}_d)$ with $\dim(\mathcal{D}_S)\ge 2$, the output $\mathcal{M}(\rho^{\otimes t})$ is proportional to a pure state. Then, for any $\rho\in\mathcal{D}_S$, the output of $\mathcal{M}$ must always be proportional to a fixed pure state,
\begin{equation}
\mathcal{M}(\rho^{\otimes t})\propto\Psi.
\end{equation}
\end{observation}
\begin{proof}
According to the assumption made by this observation, when $\dim(\mathcal{D}_S)\ge2$, there exist two fixed and linearly independent density matrices $\rho_0$ and $\rho_1$, such that for any $0\le x\le1$, we have
\begin{equation}\label{eq:ab}
\mathcal{M}\left[(x\rho_0+(1-x)\rho_1)^{\otimes t}\right]\propto\Psi_x.
\end{equation}
Because the input state can be written as 
\begin{equation}
\sum_{k=0}^t \tbinom{t}{k}x^k(1-x)^{t-k}\Pi_{\mathrm{sym}}(\rho_0^{\otimes k}\otimes\rho_1^{\otimes(t-k)}),
\end{equation}
with $\Pi_{\mathrm{sym}}$ being the symmetrization channel ($\Pi_{\mathrm{sym}}(\rho_0\otimes\rho_1)=\frac{1}{2}\rho_0\otimes\rho_1+\frac{1}{2}\rho_1\otimes\rho_0$), the output state can thus be represented as
\begin{equation}\label{eq:expansion}
\sum_{k=0}^t \tbinom{t}{k}x^k(1-x)^{t-k}\mathcal{M}\left[\Pi_{\mathrm{sym}}(\rho_0^{\otimes k}\otimes\rho_1^{\otimes(t-k)})\right],
\end{equation}
due to \emph{linearity} of $\mathcal{M}$.
According to Eq.~\eqref{eq:ab}, we have
\begin{equation}
\sum_{k=0}^t \tbinom{t}{k}x^k(1-x)^{t-k}X_k\propto\Psi_x,
\end{equation}
where $X_k=\mathcal{M}\left[\Pi_{\mathrm{sym}}(\rho_0^{\otimes k}\otimes\rho_1^{\otimes(t-k)})\right]$ is a positive semi-definite matrix as $\mathcal{M}$ is a \emph{positive} map.
Notice that the l.h.s.\ of the above equation is a linear combination of positive semi-definite matrices with positive coefficients, while the r.h.s.\ is a rank-$1$ matrix. As the r.h.s.\ is rank-$1$, every positive semi-definite matrix appearing in the l.h.s.\ must have support contained in the same one-dimensional subspace. This means that each positive matrix should be proportional to the rank-$1$ matrix on the r.h.s., including the first and the last matrices corresponding to $k=0$ and $k=t$.
We thus have 
\begin{equation}
\mathcal{M}(\rho_0^{\otimes t})\propto\Psi_x, \ \ \ \mathcal{M}(\rho_1^{\otimes t})\propto\Psi_x,
\end{equation}
for arbitrary $\rho_0$ and $\rho_1$, which concludes our proof.
\end{proof}

Thus, any universal exact purification protocol must be trivial: whenever the output is required to be pure for all inputs, it cannot depend non-trivially on the input state.
Although the assumption of producing an exactly pure output state is idealized and not directly achievable in realistic settings, this observation identifies the fundamental physical origin of the no-purification phenomenon. Specifically, the proof relies solely on the linearity and positivity of $\mathcal{M}$, two fundamental structural principles of quantum mechanics, which also underlie all later no-go results on approximate purification derived in this work. Moreover, since we do not require $\mathcal{M}$ to be trace-preserving, the result applies to all quantum operations, including post-selection and adaptive strategies.

Extending beyond quantum states, our no-purification result can be naturally generalized to quantum channels and higher-order maps~\cite{liu2020channel}. 
In particular, there is no universal procedure that transforms $t$ copies of an unknown input object into an output object with a pure Choi representation that depends non-trivially on the input. 
This excludes, for example, universal transformations from quantum channels to unitary channels, from quantum states to unitary channels, and from quantum channels to pure states. 
Moreover, the conclusion is independent of the specific protocol structure: it applies to arbitrary physically allowed strategies, including sequential, coherent, and adaptive protocols. 
The key reason is that any physically admissible supermap acting on channels can be represented, at the level of Choi operators, as a completely positive trace-preserving map acting on the corresponding Choi states~\cite{Chiribella2008superchannel,chiribella2008comb,liu2024non}. 
Since the Choi state of a unitary channel is pure, any such hypothetical transformation would induce a quantum channel mapping input Choi states to input-dependent pure states, directly contradicting Observation~\ref{obs:no-purification}.

\section{Approximate purification}
In the above, we have addressed the fundamental restriction on perfect quantum purification, whose targets are pure states or unitaries.
Nonetheless, approximate pure systems are often already useful in many tasks, thus the characterization of an approximate version of the principle is crucial for practical purposes.
With full details in Appendix~\ref{app:proof_approximate_purification}, we analyzed the robustness of Observation~\ref{obs:no-purification}, showing a clear trade-off relationship.
Specifically, for any positive trace-preserving map $\mathcal{M}$ with a fixed $t$, if its output state $\mathcal{M}(\rho^{\otimes t})$ is close to a pure state, then the difference between its output states will be small.
When the purification task demands the output state to vary significantly according to the input state, then one requires a large number of input states to obtain high purity.
This interesting trade-off relationship 
is encapsulated in the following theorem.

\begin{theorem}[Approximate purification]\label{thm:approx_no-purification_distance_between_states-new}
Let $\mathcal{M}$
be a positive trace-preserving map. Suppose that there exists a parameter $0 \le \epsilon < 1$ such that for all $\rho$ supported on a subspace $\mathcal{D}_S\subseteq\mathcal{D}(\mathcal{H}_d)$ with $\mathrm{dim}(\mathcal{D}_S)\ge 2$ 
\begin{equation}
    \Tr\bigl[\mathcal{M}(\rho^{\otimes t})^2\bigr] \ge 1-\epsilon.
\end{equation}
Then there exists a state $\sigma_0$ such that, for all $\rho\in\mathcal{D}_S$,
\begin{equation}
    D\bigl(\sigma_0,\mathcal{M}(\rho^{\otimes t})\bigr) \le (8 \pi\sqrt{2t} + 12) \sqrt{\epsilon}.
\end{equation}
Here $D(\cdot,\cdot)$ denotes the trace distance $D(\omega,\tau)=\tfrac{1}{2}\Vert \omega-\tau\Vert_1$.
\end{theorem}

Here, we let $\mathcal{M}$ be trace-preserving since we are considering the sample complexity; that is, $\mathcal{M}$ should output the target state with unit probability. 
The proof, left to Appendix~\ref{app:proof_approximate_purification}, is a generalization of the proof of Observation~\ref{obs:no-purification}.
By the same Choi-state reduction, the approximate purification bound in Theorem~\ref{thm:approx_no-purification_distance_between_states-new} also extends to all channel-involving settings and applies to arbitrary physically allowed strategies.
Specifically, if one aims to find a universal quantum protocol that approximately transforms $t$ copies of quantum channel into a unitary or pure quantum state, or transforms $t$ copies of quantum states into a unitary, similar trade-off exists.

The fundamental limitation of approximate universal purification has several important consequences. 
In particular, whenever the outputs corresponding to two different input states remain distinguishable, namely $D\left(\mathcal{M}(\rho_1^{\otimes t}),\mathcal{M}(\rho_2^{\otimes t})\right)=\Omega(1)$, Theorem~\ref{thm:approx_no-purification_distance_between_states-new} immediately implies the sample complexity lower bound
\begin{equation}
t=\Omega(\epsilon^{-1}).
\end{equation}
This scaling reveals a fundamental distinction between universal purification and purification tasks with a fixed target output state. 
For the latter class of tasks with restrictions from the free operations, the purification complexity bound scales only as $\log(\epsilon^{-1})$~\cite{fang2020purification,fang2022purification}.

Interestingly, the $\epsilon^{-1}$ scaling appears universally across a broad range of quantum information processing tasks. 
Examples include quantum purity amplification~\cite{li2024optimal,Childs2025streamingquantum}, which aims to prepare the eigenstate of $\rho$ corresponding to its largest eigenvalue, and quantum principal component analysis~\cite{Lloyd2014qpca}, which aims to transform many copies of $\rho$ into the unitary evolution $e^{-i\rho}$. 
Detailed analyses showing why these tasks achieve the complexity scaling $\mathcal{O}(\epsilon^{-1})$ are provided in Appendix~\ref{app:proof_purity_amplification} and Appendix~\ref{app:proof_qpca}. 
Notably, the same $\epsilon^{-1}$ behavior also appears in covariant codes~\cite{faist2020aqec,kong2022aqec} and quantum channel purification~\cite{Ryotaro2026channelpurification}, which aim to prepare the ideal unitary channel using several noisy ones. 
Together, these important examples suggest that approximate universal purification provides a common information-theoretic framework underlying several previously unrelated complexity phenomena in quantum information processing.

Another fundamental consequence of approximate purification arises in quantum learning and estimation. 
Indeed, estimating a property of an unknown quantum state $\rho$, represented by a function $f(\rho)$, from many copies of $\rho$ can itself be viewed as a universal purification task. 
This connection can be established through a simple measure-and-prepare protocol. Given $t$ copies of $\rho$, one first constructs an estimator $\hat{f}$ for $f(\rho)$ using arbitrary quantum measurements, and then prepares the single-qubit  state vector
\begin{equation}
| {\psi_{\hat{f}}} \rangle=\frac{1}{\sqrt{2}}\left(\ket{0}+e^{i\hat{f}}\ket{1}\right).
\end{equation}
This procedure naturally fits into the framework of universal purification: the protocol itself is independent of the input state, while the target output pure state depends non-trivially on $\rho$ through the quantity $f(\rho)$. 
However, for any finite number of copies $t$, the estimator $\hat{f}$ inevitably exhibits statistical fluctuations due to the intrinsic probabilistic nature of quantum mechanics. Consequently, the actual output of the measure-and-prepare channel is the mixed state
\begin{equation}
\mathbb{E}\!\left[|{\psi_{\hat{f}}}\rangle\langle{\psi_{\hat{f}}}|\right],
\end{equation}
whose purity is directly related to the variance of $\hat{f}$. 
Using this observation, we derive a generalized standard quantum limit for estimating arbitrary functions of $\rho$, with the detailed proof presented in Appendix~\ref{app:proof_sql}.

\begin{corollary}[Standard quantum limit for arbitrary function]\label{coro:sql}
Given $t$ copies of unknown quantum state $\rho$, which is supported on a subspace $\mathcal{D}_S\subseteq\mathcal{D}(\mathcal{H}_d)$ with $\mathrm{dim}(\mathcal{D}_S)\ge 2$, any target function $f(\rho)$, and an estimator $\hat{f}$ constructed by any measurement performed on $t$ copies of $\rho$ with 
\begin{equation}\label{eq:sql_difference}
\sup_{\rho}\abs{f(\rho)-\mathbb{E}[\hat{f}]}\le \frac{1}{10}\left[\sup_\rho f(\rho)-\inf_\rho f(\rho)\right]=\frac{1}{10}\Delta_f,
\end{equation}
then the variance of $\hat{f}$ is lower bounded by
\begin{equation}\label{eq:sql_variance}
   \sup_\rho \mathrm{Var}\left(\hat{f}\right)=\Omega(\Delta_f^2t^{-1}).
\end{equation}
\end{corollary}
This result also directly generalizes to the estimation of channel properties and substantially generalizes previous derivations in quantum metrology based on the additivity of quantum Fisher information~\cite{braunstein1994geometry,Tsang2020semi}, as it does not require the target function to satisfy any smoothness or differentiability assumptions and applies to global estimation. 
It shows that the standard quantum limit is an extremely fundamental principle for quantum learning, while also demonstrating the tightness of the lower bound established in Theorem~\ref{thm:approx_no-purification_distance_between_states-new}.

\section{Case Studies}\label{sec:case}

\subsection{Approximate pure dilation state preparation}

The $\Omega(\epsilon^{-1})$ sample complexity lower bound is the most general lower bound that holds for all purification tasks, which is dimension-independent and does not depend on the details of the purification task.
The only requirements are: (i) the output should be close to a pure state, and (ii) with different inputs, the outputs should have a large deviation.
By contrast, other practical tasks with more stringent requirements may require a much higher sample complexity, such as the preparation of a pure dilation state.
We define 
$\Psi_{A,B}$ as the pure dilation state of a mixed state $\rho$ 
so that
$\Tr_B(\Psi_{A,B})=\rho$.
Besides being a fundamental concept, the pure dilation state also carries broad practical significance in areas such as quantum algorithms, learning, and simulation.
In Ref.~\cite{liu2024exponential}, the authors show that with pure dilation states, certain quantum non-linear properties can be efficiently estimated without a high requirement for quantum memory.
A prevalent quantum algorithmic framework, known as \emph{quantum singular value transformation} (QSVT) ~\cite{Low2019hamiltonian,martyn2021grand}, is based on an algorithm that encodes the target matrix as a block of a unitary operation.
If one encodes the target density matrix into a unitary, one can efficiently estimate properties such as mixed-state fidelity or quantum entropy~\cite{gilyen2022petz,wang2023fidelity,wang2024entropy}.
These algorithms typically rely on an oracle to prepare the pure dilation state of the target quantum state~\cite{gilyen2019distributional}.
Hence, analyzing the complexity in preparing a pure dilation state helps to understand the practical difficulty in realizing such an oracle.

By definition, this task aims to find a quantum operation, $\mathcal{M}$, that acts on $\rho^{\otimes t}$ and the output state is some pure dilation state satisfying two requirements: (1) the output is a pure state; (2) the marginal of the output state is $\rho$.
Theorem~\ref{obs:no-purification} shows that it is impossible to simultaneously achieve these two requirements, but in practice, it is also valuable to understand if we can do so up to some approximation. 
\begin{theorem}\label{thm:no_approx_dilation}
Let $\mathcal{M}:\mathcal{D}(\mathcal{H}_A^{\otimes t})\to\mathcal{D}(\mathcal{H}_{A,B})$ be a positive trace-preserving map with $d \ge C$ for a sufficiently large
absolute constant $C$.
Suppose there exist parameters $\epsilon,\eta\ge 0$ such that $\xi \coloneqq 2(\sqrt{\epsilon}+\epsilon+\eta) < \tfrac{1}{3}$ and, for arbitrary $\rho$,
\begin{enumerate}
    \item[(1)] (Near purity) $\Tr\left[\mathcal{M}(\rho^{\otimes t})^2\right]\ge1 - \epsilon$,
    \item[(2)] (Marginal accuracy) $D\left(\Tr_B[\mathcal{M}(\rho^{\otimes t})],\rho\right)\le\eta$.
\end{enumerate}
Then
\begin{equation}\label{eq:dilation_lower_bound}
    t \ge \frac{ed}{128 \left[\xi + 13\exp\left(-\frac{d}{32(\ln d)^2}\right)\right]^2}.
\end{equation}
\end{theorem}
For qubit systems, the linear scaling with $d$ in Eq.~\eqref{eq:dilation_lower_bound} implies that the sample complexity $t$ scales exponentially with the system size $n$ (where $d = 2^n$). 
This exponential overhead rules out the efficient preparation of a pure dilation state for the aforementioned learning or algorithmic tasks.
The proof follows from a direct extension of the proof of Observation~\ref{obs:no-purification}, with the full proof provided in Appendix~\ref{app:no_approx_purification}.
Following the same argument, if the output state of $\mathcal{M}(\rho^{\otimes t})$ is close to the pure dilation state of $\rho$, then, given any positive decomposition, $\rho^{\otimes t}=\sum_ip_iM_i$ with $p_i\ge 0$ and $M_i$ being positive semi-definite matrix, $\mathcal{M}(M_i)$ should also be close to the pure dilation state of $\rho$.
Suppose the target state is given by some classical probability distribution $\textbf{p}=\{p_i\}_i$, $\rho=\sum_i p_i\ketbra{i}{i}$, $\rho^{\otimes t}$ can be decomposed into different orthogonal states with probability $\textbf{p}^t$.
Thus, if the map for preparing an approximate pure dilation state exists, it can map a single state sampled by $\textbf{p}^t$ to the pure dilation state of $\rho$ and recover the whole probability distribution $\textbf{p}$ by tracing a subsystem.
According to a well-known result in distribution estimation~\cite{Han2015Minimax}, we can establish the exponential lower bound for $t$.

\subsection{Gaussian purification}

In all previous discussions, we imposed no restrictions on the universal purification map itself: any physically allowed quantum operation was permitted. As a consequence, approximate universal purification is always achievable in principle. Indeed, simple measure-and-prepare protocols already provide such constructions --- once some information about the input state can be learned from many copies, one can always encode this information into an approximate pure output state. This naturally raises a more fundamental question: can both exact and approximate universal purification become impossible once physical restrictions on the allowed quantum operations are imposed?

In this section, we investigate this question using passive Gaussian quantum channels~\cite{cerf2007quantum,GaussianQuantumInfo,caruso2011optimaldilation}. Because these operations arise naturally in linear optical implementations~\cite{cerf2007quantum}, they provide a standard framework for studying bosonic quantum systems and quantum optical platforms. We will show that, when restricting to only these operations, even approximate universal purification encounters fundamental limitations. Since this is no longer a finite-dimensional setting, the relevant preliminaries and proofs are deferred to Appendix~\ref{app:Gaussian}. We now begin by presenting the analogue of no universal exact purification for passive Gaussian channels.

\begin{observation}[No Gaussian universal exact purification]\label{obs:no-Gaussian-purification}
Let $t$ be any finite integer, and let 
$\mathcal{M}: \mathcal{D}(\mathcal{H}^{\otimes t})
\rightarrow \mathcal{D}(\mathcal{H})$ be a passive Gaussian channel. Suppose 
that for any Gaussian state $\rho$ supported on $\mathcal{H}$, the output $\mathcal{M}(\rho^{\otimes t})$ is a pure 
Gaussian state. Then, for any Gaussian input state $\rho$, the output of $\mathcal{M}$ must always be a fixed pure Gaussian state,
$\mathcal{M}(\rho^{\otimes t})=\Psi$.
\end{observation}

Observation~2 shows that passive Gaussian operations cannot realize exact universal purification. The next question is whether this limitation can be overcome approximately by increasing the number of input copies. Surprisingly, unlike the unrestricted setting where approximate purification can always be achieved through measure-and-prepare strategies, passive Gaussian operations remain fundamentally incapable of universal purification even in the approximate regime.
\begin{theorem}[No Gaussian universal approximate purification]\label{thm:approx_no-purification_distance_between_states-Gaussian}
Let $\mathcal{M}: \mathcal{D}(\mathcal{H}^{\otimes t})
\rightarrow \mathcal{D}(\mathcal{H})$
be a passive Gaussian channel. Suppose that there exists a parameter $0 \le \epsilon < 1$ such that for all Gaussian states 
$\rho$ contained in a subspace ${\cal G}$ of Gaussian states with covariance matrices that are larger than 
$\mathbb{I}$ in matrix ordering, 
we have that
\begin{equation}\label{BigP}
    {\rm Tr} 
    \bigl[\mathcal{M}(\rho^{\otimes t})^2\bigr] \ge 1-\epsilon.
\end{equation}
Then there exists a Gaussian state $\Psi$ such that, for all Gaussian states 
$\rho\in {\cal G}$,
 \begin{equation}\label{eq:gaussian_distance}
 	D(\mathcal{M}(\rho^{\otimes t}), \Psi) \leq 4(1+\sqrt{3}) \epsilon (1+4 \epsilon).
 \end{equation}
\end{theorem}
Comparing this result with Theorem~\ref{thm:approx_no-purification_distance_between_states-new} or Theorem~\ref{thm:no_approx_dilation}, we can notice the main difference: the r.h.s.\ of Eq.~\eqref{eq:gaussian_distance} is 
independent of $t$, indicating that coherent state manipulation does not bring an advantage. This is to an extent reminiscent of earlier findings in a different context that show that coherent state manipulation of Gaussian states finds obstructions \cite{PhysRevLett.89.137903,PhysRevA.66.032316,PhysRevLett.89.137904,GiedkeSchmidt}. 
This leads to an important consequence: passive Gaussian operation cannot perform an 
arbitrary input-dependent purification task, even approximately. Note also that in the purification task, the subset of inputs we consider is that of Gaussian states the variance of which is not below that of the vacuum for any quadrature, which is natural in a purification setting.

\section{Discussion and outlook}

Identifying the fundamental no-go principles of quantum information processing has been one of the central and defining goals of quantum information science.
A common paradigm for establishing such results is to first prove an exact impossibility theorem in an idealized setting, then study the achievable performance and resource requirements in the approximate regime, and finally explore the implications in specific scenarios of practical relevance.
In this work, we have completed this program for the fundamental problem of universal purification. Observation~\ref{obs:no-purification} establishes the no-go theorem in the exact setting, while Theorem~\ref{thm:approx_no-purification_distance_between_states-new} characterizes its asymptotic behavior in the approximate regime.
Furthermore, Corollary~\ref{coro:sql} and Theorem~\ref{thm:no_approx_dilation} reveal deep connections between approximate universal purification and quantum learning.
Finally, Observation~\ref{obs:no-Gaussian-purification} and Theorem~\ref{thm:approx_no-purification_distance_between_states-Gaussian} show that, under physically motivated restrictions on the allowed operations, namely passive Gaussian operations, universal purification can remain fundamentally impossible even in the approximate regime.

These results constrain natural tasks in coherent quantum error mitigation~\cite{terhal2015qec,cai2023qem,hugginsVirtualDistillationQuantum2021,koczorExponentialErrorSuppression2021} and quantum cooling~\cite{cotler2019cooling,zeng2021universal,andong2023linear}.
In particular, our results represent an information-theoretic counterpart of the third law of thermodynamics, showing that beyond the well known impossibility of cooling to zero temperature with finite thermodynamic processes, no positive map whatsoever can universally cool quantum states to their corresponding zero-temperature pure states of arbitrary systems with finite resources. Moreover, our approximate purification results directly imply fundamental complexity bounds for low-temperature state preparation, which can be viewed as quantitative manifestations of the above principle and complements existing thermodynamic bounds~\cite{ThirdLaw}.
In this sense, we hope that our work will stimulate new thinking in quantum information processing: rather than proposing new schemes for purification, coherent error mitigation, or cooling, we provide tools for fundamentally assessing how well such tasks can perform.

Interestingly, it is possible to sidestep certain no-purification limitations in specific settings. 
For example, in some quantum learning tasks such as quantum tomography, one does not necessarily require a pure dilation state itself as the final resource. 
Recent works on \emph{random purifications}~\cite{pelecanos2025mixed,tang2025conjugate,mele2025random,girardi2025random,yoshida2025random,mele2025optimal} exploit this observation by preparing an ensemble of random pure dilation states, which effectively corresponds to a mixed-state resource rather than a single pure dilation state. 
In this way, the restriction of Theorem~\ref{thm:no_approx_dilation} can be bypassed for certain quantum learning applications.
More broadly, an important open question is whether similar mixed-state substitutes can be found in other scenarios requiring universal purification, or whether the limitations imposed by no purification can be sidestepped by paying additional computational costs or by relaxing the requirements of the target task. 
A related example already appears in quantum error mitigation, where non-positive maps are employed to improve the accuracy of expectation-value estimation at the cost of increased sampling complexity~\cite{cai2023qem}.

Another likely fruitful direction is to further investigate the relationship between universal purification and quantum learning.
The proof of Theorem~\ref{thm:no_approx_dilation} is achieved by reducing the preparation of a pure dilation state to the learning of a classical probability distribution. A natural next question is whether this reduction can be strengthened to quantum distributions, potentially showing that the complexity of preparing pure dilation states is fundamentally equivalent to that of quantum tomography itself. 
Moreover, Corollary~\ref{coro:sql} demonstrates through a measure-and-prepare construction that limitations on approximate purification imply limitations for quantum learning. 
From this perspective, the restriction on passive Gaussian operations established in Theorem~\ref{thm:approx_no-purification_distance_between_states-Gaussian} may indicate the existence of fundamental limitations on Gaussian-state learning using passive Gaussian operations. 
Understanding these possible connections between purification, tomography, and learning 
complexity may provide a deeper picture of the ultimate capabilities and limitations of quantum information processing. It is the hope that the present work inspires some of these interesting research
directions and inspires new thinking about purification tasks.

\section*{Acknowledgments}
Z.L.\ appreciates Ryuji Takagi and Kento Tsubouchi for inspiring him to consider the problem of preparing pure dilation states.
We also appreciate insightful discussions with  Qisheng Wang, Yuxiang Yang, Yunlong Xiao, Masahito Hayashi, and Allen Zang.
Z.L.\ and Z.-W.L.\ are supported in part by NSFC under Grant No.~12475023, Dushi Program, and a startup funding from YMSC.
Z.D.\ acknowledges the support from the National Natural Science Foundation of China Grant No.~12174216 and the Innovation Program for Quantum Science and Technology Grant No.~2021ZD0300804 and No.~2021ZD0300702.
Z.C.\ acknowledges support from the EPSRC project Robust and Reliable Quantum Computing (RoaRQ, EP/W032635/1) and the EPSRC quantum technologies career acceleration fellowship (UKRI1226). Work in Berlin has been funded by the 
BMFTR (Hybrid++, QoSol, PasQuops, MuniQCAtoms), the Quantum Flagship (Millenion, PasQuans2), Berlin Quantum, the Munich Quantum Valley, the DFG (SPP 2514), and the European Research Council (DebuQC).

\appendix

\section{Approximate purification}\label{app:approx_purification}

\subsection{Proof of Theorem~\ref{thm:approx_no-purification_distance_between_states-new}\label{app:proof_approximate_purification}}

Fix an integer $t$ and two states $\rho_0, \rho_1 \in \mathcal{D}(\mathcal{H}_S)$. For $x \in [0, 1]$, define the mixture $\rho_x \coloneqq (1-x)\rho_0 + x\rho_1$. Write $\sigma_x\coloneqq\mathcal{M}(\rho_x^{\otimes t})$ as a mixture of states as
\begin{equation}
\sigma_x = \sum_{k=0}^{t} \binom{t}{k} x^k (1-x)^{t-k} \omega_k = \sum_{k=0}^{t} p_x(k) \omega_k,
\end{equation}
where $p_x(k)\coloneqq \binom{t}{k}x^{k}(1-x)^{t-k}$ and $\omega_k \coloneqq \mathcal{M}\bigl(\Pi_{\mathrm{sym}}(\rho_0^{\otimes t-k} \otimes \rho_1^{\otimes k})\bigr)$.

Define $\phi_x$ to be the corresponding eigenvector of the largest eigenvalue $\lambda_1$ of $\sigma_x$. Then, 
\begin{equation}\label{eq:closeness-eigenvalue-mixed-state}
    \bra{\phi_x}\sigma_x\ket{\phi_x} = \lambda_1 = \lambda_1(\sum_i \lambda_i) \ge \sum_i \lambda_i^2 \ge 1-\epsilon,
\end{equation}
where the last inequality follows from the near-purity condition. 
To analyze the components $\omega_k$, we partition the indices into ``good'' and ``bad'' sets based on their overlap with the dominant eigenvector $\phi_x$ as
\begin{equation}\label{eq:fidelity-eigenvalue-component}
\mathcal{G}_x \coloneqq \{k : \langle \phi_x | \omega_k | \phi_x \rangle \ge 1 - 4\epsilon \}, \quad \mathcal{B}_x \coloneqq \{0, \dots, t\} \setminus \mathcal{G}_x.
\end{equation}
We first prove the following lemmas.

\begin{lemma}[Closeness of eigenvalues]\label{lem:closeness-eigenvalue}
For any $x, y \in [0,1]$, if $\mathcal{G}_x \cap \mathcal{G}_y \neq \emptyset$, then $D(\phi_x,\phi_y) \le 4 \sqrt{\epsilon}$.
\end{lemma}

\begin{proof}
Pick an index $k \in \mathcal{G}_x \cap \mathcal{G}_y$. By the definition of $\mathcal{G}_x$ in Eq.~\eqref{eq:fidelity-eigenvalue-component}, we have $\langle \phi_x | \omega_k | \phi_x \rangle \ge 1-4\epsilon$. Applying the Fuchs–van de Graaf inequality for a pure state and a general density matrix, we obtain
\begin{equation}
D(\phi_x, \omega_k) \le \sqrt{1 - \langle \phi_x | \omega_k | \phi_x \rangle} \le \sqrt{4\epsilon} = 2\sqrt{\epsilon}.
\end{equation}
Similarly, $D(\phi_y, \omega_k) \le 2\sqrt{\epsilon}$ also holds. The claim then follows directly from the triangle inequality.
\end{proof}

\begin{lemma}[Conditional intersection]\label{lem:condition-intersection}
For any $x, y \in [0, 1]$, if $|y-x| < \sqrt{\frac{x(1-x)}{2t}}$, then $\mathcal{G}_x \cap \mathcal{G}_y \neq \emptyset$.
\end{lemma}

\begin{proof}
    From Eq.~\eqref{eq:closeness-eigenvalue-mixed-state}, we have $\sum_k p_x(k) \langle \phi_x | \omega_k | \phi_x \rangle \ge 1-\epsilon$. By Markov's inequality, the probability of the bad set 
    satisfies
\begin{equation}\label{eq:prob_bad_set}
    \sum_{k \in \mathcal{B}_x} p_x(k) \le \frac{\sum_k p_x(k) (1 - \langle \phi_x | \omega_k | \phi_x \rangle)}{4\epsilon} \le \frac{1}{4}.
    \end{equation}
    Similarly, $\sum_{k \in \mathcal{B}_y} p_y(k) \le 1/4$.

    Let $q(k) \coloneqq \min(p_x(k), p_y(k))$. By definition, $\sum_{k} q(k) = 1 - \TV(p_x, p_y)$. The sum over the union of bad sets is bounded by
    \begin{equation}
        \sum_{k \in \mathcal{B}_x \cup \mathcal{B}_y} q(k) \le \sum_{k \in \mathcal{B}_x} p_x(k) + \sum_{k \in \mathcal{B}_y} p_y(k) \le \frac{1}{2}.
    \end{equation}
    Thus, the sum over the intersection of good sets satisfies
    \begin{equation}
    \begin{split}
        \sum_{k \in \mathcal{G}_x \cap \mathcal{G}_y} q(k) &= \sum_{k} q(k) - \sum_{k \in \mathcal{B}_x \cup \mathcal{B}_y} q(k) \\
        &\ge \left(1 - \TV(p_x, p_y)\right) - \frac{1}{2} \\
        &= \frac{1}{2} - \TV(p_x, p_y).
    \end{split}
    \end{equation}
    To bound $\TV(p_x, p_y)$, 
    we use Pinsker's inequality and the properties of the binomial distribution
    \begin{equation}
    \begin{split}
        \TV(p_x, p_y) &\le \sqrt{\frac{1}{2} \mathrm{KL}( \mathrm{Bin}(t,y) \| \mathrm{Bin}(t,x) )} \\
        &= \sqrt{\frac{t}{2} \mathrm{KL}( \mathrm{Ber}(y) \| \mathrm{Ber}(x) )}.
    \end{split}
    \end{equation}
Using the upper bound  \begin{equation}
\mathrm{KL}(\mathrm{Ber}(y) \| \mathrm{Ber}(x)) \le \frac{(y-x)^2}{x(1-x)} 
 \end{equation}
 for the Bernoulli distribution, and substituting the assumption $|y-x| < \sqrt{\frac{x(1-x)}{2t}}$, we obtain
\begin{equation}
    \TV(p_x, p_y) < \sqrt{\frac{t}{2} \cdot \frac{1}{2t}} = \frac{1}{2}.
\end{equation}
It follows that $\sum_{k \in \mathcal{G}_x \cap \mathcal{G}_y} q(k) > \frac{1}{2} - \frac{1}{2}= 0$, which implies $\mathcal{G}_x \cap \mathcal{G}_y \neq \emptyset$.
\end{proof}

\begin{lemma}[Increasing distance]\label{lem:increasing-distance}
Let $x_0 = \frac{1}{t+1}$ and define the sequence $x_{i+1} = \min(x_i + \Delta_i, 1)$ with step size $\Delta_i \coloneqq \kappa \sqrt{\frac{x_i(1-x_i)}{t}}$. For $m \ge \frac{\pi \sqrt{1+2\kappa}}{\kappa} \sqrt{t} + 1$, we have $x_m = 1$.
\end{lemma}

\begin{proof}
Assume for contradiction that $x_i + \Delta_i < 1$ for all $i < m$, such that $x_{i+1} - x_i = \Delta_i$. Define the potential function $u(x) \coloneqq \arcsin(2x - 1)$ for $x \in [0, 1]$ and let $h(x) \coloneqq x(1-x)$. Note that $u'(x) = 1/\sqrt{h(x)}$.
For $x_i < x_{i+1}\le 1/2$, by the mean value theorem, there exists $\xi_i \in [x_i, x_{i+1}]$ such that
\begin{equation}
    u(x_{i+1}) - u(x_i) = u'(\xi_i) \Delta_i = \frac{\Delta_i}{\sqrt{h(\xi_i)}}.
\end{equation}
Since $h(x)$ as a function of $x$ is increasing on $(0, 1/2]$, we have $u(x_{i+1}) - u(x_i) \ge \Delta_i / \sqrt{h(x_{i+1})}$. To bound $h(x_{i+1})$, observe that $h(x_{i+1}) = h(x_i) + (1-2x_i)\Delta_i - \Delta_i^2 \le h(x_i) + \Delta_i$. Given $x_i \ge x_0$, we have 
\begin{equation}
\sqrt{t h(x_i)} \ge \sqrt{t \frac{t}{(t+1)^2}} \ge 1/2 
\end{equation}
for $t \ge 1$. Thus, $\Delta_i = \kappa h(x_i) \frac{1}{\sqrt{t h(x_i)}} \le 2\kappa h(x_i)$. It follows that $h(x_{i+1}) \le h(x_i)(1+2\kappa)$. Substituting this into the lower bound for the potential change, we get
\begin{equation}
\begin{split}
    u(x_{i+1}) - u(x_i) &\ge \frac{\kappa \sqrt{h(x_i)/t}}{\sqrt{h(x_i)(1+2\kappa)}} \\
    &= \frac{\kappa}{\sqrt{t(1+2\kappa)}}.
\end{split}
\end{equation}
For $x_i \ge 1/2$, a similar argument holds using the decreasing monotonicity of $h(x)$, yielding $u(x_{i+1}) - u(x_i) \ge \kappa/\sqrt{t}$.

Accounting for that at most one step where the interval $[x_i, x_{i+1}]$ contains $1/2$, the total change after $m$ steps satisfies
\begin{equation}
\begin{split}
    u(x_m) - u(x_0) &= \sum_{i=0}^{m-1} (u(x_{i+1}) - u(x_i)) \\
    &\ge (m-1) \frac{\kappa}{\sqrt{t(1+2\kappa)}},
\end{split}
\end{equation}
the maximum range of the 
potential function is $u(1) - u(0) = \pi/2 - (-\pi/2) = \pi$. Since $u(x_0) > u(0)$, this implies $u(x_m) > u(1)$, a contradiction. Hence, the sequence must reach $x_m = 1$.
\end{proof}

Now we proceed to proving
Theorem~\ref{thm:approx_no-purification_distance_between_states-new}. 
Let $x_0 = \frac{1}{t+1}$ and define the sequence $x_{i+1} = x_i + \frac{1}{2} \sqrt{\frac{x_i (1-x_i)}{t}}$ for $i=0, \dots, m-1$, with $m = \lceil 2 \pi \sqrt{2 t} + 1 \rceil$. According to Lemma~\ref{lem:increasing-distance} (setting $\kappa=1/2$), the sequence reaches the boundary $x_m = 1$.

For each step, Lemma~\ref{lem:condition-intersection} ensures that $\mathcal{G}_{x_i} \cap \mathcal{G}_{x_{i+1}} \neq \emptyset$ since the step size $|x_{i+1} - x_i|$ satisfies the required bound. Consequently, Lemma~\ref{lem:closeness-eigenvalue} implies $D(\phi_{x_i}, \phi_{x_{i+1}}) \le 4\sqrt{\epsilon}$. Combining these bounds via the triangle inequality, we obtain
\begin{equation}
\begin{split}
D(\phi_{x_0}, \phi_{x_m}) &\le 4m\sqrt{\epsilon} \\
&\le (8\pi\sqrt{2t} + 8)\sqrt{\epsilon}.
\end{split}
\end{equation}

At the starting point $x_0$, the probability at $k=0$ is 
\begin{equation}
p_{x_0}(0) = (1-\frac{1}{t+1})^{t} \ge 1/e. 
\end{equation}
Since $1/e > 1/4$, the bound on the bad set probability in Eq.~\eqref{eq:prob_bad_set} implies that $0 \notin \mathcal{B}_{x_0}$, hence $0 \in \mathcal{G}_{x_0}$. This yields $D(\phi_{x_0}, \omega_0) \le 2 \sqrt{\epsilon}$. By a symmetric argument at $x_m=1$, where $p_{1}(t)=1$, we have $t \in \mathcal{G}_{x_m}$ and thus $D(\phi_{x_m}, \omega_t) \le 2\sqrt{\epsilon}$.

Combining these bounds, we have
\begin{equation}
\begin{split}
D(\omega_0, \omega_t) &\le D(\omega_0, \phi_{x_0}) + D(\phi_{x_0}, \phi_{x_m}) + D(\phi_{x_m}, \omega_t) \\
&\le (8 \pi\sqrt{2t} + 12) \sqrt{\epsilon}.
\end{split}
\end{equation}
Finally, since $\omega_0=\mathcal{M}(\rho_0^{\otimes t})$ and $\omega_t=\mathcal{M}(\rho_1^{\otimes t})$, the proof is done.

\subsection{Quantum purity amplification}\label{app:proof_purity_amplification}

We would like to show here that the sample complexity lower bound given by the approximate purification theorem is also met by quantum purity amplification algorithms.
We will first prove the following lemma.

\begin{lemma}[Fidelity bound]\label{lemma:fidelity}
Let $\psi = |\psi\rangle\langle\psi|$ be a pure state, and let $\rho$ be a quantum state such that
\begin{equation}
F(\rho,\psi) := \mathrm{Tr}(\rho \psi) = 1-\epsilon_1.
\end{equation}
Define
\begin{equation}
\epsilon_2 := 1-\mathrm{Tr}(\rho^2).
\end{equation}
Then
\begin{equation}
\epsilon_2  \le 2\epsilon_1.
\end{equation}
\end{lemma}

\begin{proof}
Since the maximal eigenvalue of $\rho$
\begin{equation}
\lambda_{\max}(\rho) \ge \langle \psi | \rho | \psi \rangle = 1-\epsilon_1,
\end{equation}
we have
\begin{equation}
\mathrm{Tr}(\rho^2) \ge \lambda_{\max}(\rho)^2 \ge (1-\epsilon_1)^2.
\end{equation}
Therefore,
\begin{equation}
\epsilon_2
=
1-\mathrm{Tr}(\rho^2)
\le
1-(1-\epsilon_1)^2
=
2\epsilon_1-\epsilon_1^2
\le
2\epsilon_1.
\end{equation}
\end{proof}

Knowing this, the approximate purification theorem can be adjusted into
the following statement.

\begin{corollary}[Approximate no-purification result]\label{coro:approx_no-purification_fidelity}
Let $\mathcal{M}$ be a positive trace-preserving map. Suppose that there exists a parameter $0 \le \epsilon < 1$ such that for any $\rho$ supported on a subspace $\mathcal{D}_S\subseteq\mathcal{D}(\mathcal{H}_d)$ with $\mathrm{dim}(\mathcal{D}_S)\ge 2$, there exists a pure state $\psi_\rho$ with
\begin{equation}
    \Tr(\mathcal{M}(\rho^{\otimes t})\psi_\rho) \ge 1-\epsilon.
\end{equation}
Then there exists a state $\sigma_0$ such that, for all $\rho\in\mathcal{D}_S$,
\begin{equation}
    D\bigl(\sigma_0,\mathcal{M}(\rho^{\otimes t})\bigr) \le (8 \pi\sqrt{2t} + 12) \sqrt{2\epsilon}.
\end{equation}
Here $D(\cdot,\cdot)$ denotes the trace distance $D(\omega,\tau)=\tfrac{1}{2}\Vert \omega-\tau\Vert_1$.
\end{corollary}
Therefore, when the output state $\mathcal{M}(\rho^{\otimes t})$ with different input state $\rho$ can have a large distance, the sample complexity lower bound is also $\Omega(\epsilon^{-1})$ related with the infidelity.
This lower bound is naturally met by the quantum purity amplification protocols.

\begin{fact}[Theorem~II.3 in Ref.~\cite{li2024optimal}]\label{fact:state_purification}
Denote $\psi_0=\ketbra{\psi_0}{\psi_0}$ to be the eigenstate of $\rho$ with the largest eigenvalue and $\lambda_0>\lambda_{1}\ge\cdots\ge\lambda_{d-1}$ to be the eigenvalues of $\rho$.
Then there exists a quantum channel that takes $t$ copies of arbitrary $\rho$ as input and outputs a state $\sigma$ with infidelity $\epsilon \coloneqq 1-\bra{\psi_0}\sigma\ket{\psi_0}$.
The sample complexity $t$ is
\begin{equation}\label{eq:complexity_state_purification}
t=\frac{1}{\epsilon}\sum_{i=1}^{d-1}\frac{\lambda_i}{(\lambda_0-\lambda_i)^2}+\mathcal{O}(1).
\end{equation}
\end{fact}
In quantum state purification, since the target output is the pure eigenstate of the input state, when the actual output state is close to the target, the output states for different input states can differ significantly.
Thus, our $\Omega(\epsilon^{-1})$ naturally holds for this task.
Notice that the complexity shown in Eq.~\eqref{eq:complexity_state_purification} can saturate the lower bound for those states with $\sum_{i=1}^{d-1}{\lambda_i}/{(\lambda_0-\lambda_i)^2}$ being some constant.

\subsection{Quantum principal component analysis}\label{app:proof_qpca}
Another important example that saturates our lower bound is the quantum principal component analysis algorithm~\cite{Lloyd2014qpca}, which transforms input states into unitary evolutions.
To demonstrate saturation, we first need to generalize our approximate purification theorem to cover maps that map states to quantum channels.
Similarly, we first make the following observation.

\begin{lemma}[Diamond norm bound]\label{lemma:diamond}
Let $\mathcal{U}$ be a unitary channel, and let $\mathcal{C}$ be a quantum channel such that
\begin{equation}
D_{\diamond}(\mathcal{C},\mathcal{U}) := \frac{1}{2}\|\mathcal{C}-\mathcal{U}\|_{\diamond} = \epsilon_1.
\end{equation}
Let $\psi=\ketbra{\psi}{\psi}$ to be an arbitrary pure state, and define
\begin{equation}
\epsilon_2 := 1-\mathrm{Tr}(\mathcal{C}(\psi)^2).
\end{equation}
Then
\begin{equation}
\epsilon_2 \le 2\epsilon_1.
\end{equation}
\end{lemma}

\begin{proof}
Let
\begin{equation}
\rho := \mathcal{C}(\psi), \qquad \sigma := \mathcal{U}(\psi).
\end{equation}
Since $\psi$ is pure and $\mathcal{U}$ is unitary, $\sigma$ is also a pure state.
According to the definition of the diamond distance, for any input state $\omega$,
\begin{equation}
D\bigl(\mathcal{C}(\omega),\mathcal{U}(\omega)\bigr) \le D_{\diamond}(\mathcal{C},\mathcal{U}).
\end{equation}
In particular, taking $\omega=\psi$, we obtain
\begin{equation}
D(\rho,\sigma) \le \epsilon_1.
\end{equation}
Using the relationship between the fidelity and trace-distance, we have
\begin{equation}
1-\mathrm{Tr}(\rho\sigma) \le D(\rho,\sigma) \le \epsilon_1,
\end{equation}
and hence
\begin{equation}
\mathrm{Tr}(\rho\sigma) \ge 1-\epsilon_1.
\end{equation}
Adopting the conclusion made in Lemma~\ref{lemma:fidelity}, we have
\begin{equation}
\epsilon_2
=
1-\mathrm{Tr}(\mathcal{C}(\psi)^2)
=
1-\mathrm{Tr}(\rho^2)
\le
2\epsilon_1.
\end{equation}
This proves the claim.
\end{proof}

Combining Theorem~\ref{thm:approx_no-purification_distance_between_states-new}, we can derive the following result.

\begin{corollary}[Approximate no-purification in diamond norm]\label{coro:approx_no-purification_diamond}
Let $\mathcal{M}$ be a linear map that maps unit-trace semi-positive matrices to completely-positive trace-preserving maps.
Suppose that there exists a parameter $0 \le \epsilon < 1$ such that for any $\rho$ supported on a subspace $\mathcal{D}_S\subseteq\mathcal{D}(\mathcal{H}_d)$ with $\mathrm{dim}(\mathcal{D}_S)\ge 2$, there exists a unitary channel $\mathcal{U}_\rho$ with
\begin{equation}
\norm{\mathcal{M}(\rho^{\otimes t})-\mathcal{U}_\rho}_{\diamond}\le\epsilon.
\end{equation}
Then, for arbitrary pure state $\psi$, there exists a state $\sigma_{0,\psi}$ such that, for all $\rho\in\mathcal{D}_S$
\begin{equation}\label{eq:approx_no-purification_diamond}
    D\left[\sigma_{0,\psi},\mathcal{M}(\rho^{\otimes t})(\psi)\right] \le (8 \pi\sqrt{2t} + 12) \sqrt{2\epsilon}.
\end{equation}
\end{corollary}

This corollary means that, if one wants to ensure that the states $\mathcal{M}(\rho^{\otimes t})(\psi)$ have large deviations with different $\rho$ and fixed $\psi$, the sample complexity should scale as $\Omega(\epsilon^{-1})$.
This scaling lower bound is satisfied by the quantum principal component analysis algorithm~\cite{Lloyd2014qpca}.

\begin{fact}[Ref.~\cite{Lloyd2014qpca}]
There is a quantum process that takes $t=\mathcal{O}(\tau^2/\epsilon)$ copies of unknown state $\rho$ as the input and outputs a quantum channel $\mathcal{C}_\rho$, which is close to a unitary channel $\mathcal{U}_\rho(\cdot)=e^{-i\rho \tau}\cdot e^{i\rho \tau}$ in diamond distance
\begin{equation}
\norm{\mathcal{C}_\rho-\mathcal{U}_\rho}=\epsilon.
\end{equation}
\end{fact}
It is easy to see why this upper bound saturates the lower bound given in Corollary~\ref{coro:approx_no-purification_diamond}.
When $\epsilon$ is small, like of order $1/t$, then $\mathcal{M}(\rho^{\otimes t})(\psi)\approx\mathcal{U}_\rho(\psi)$ would have large difference with different $\rho$.
Take the single-qubit case as an example, where we can fix the state $\psi$ as $\ketbra{+}{+}$, where $\ket{\pm}=(\ket{0}\pm\ket{1})/\sqrt{2}$.
If we set the evolution time $\tau=\pi$ and the state $\rho$ to be the maximally mixed state, the output state of $\mathcal{U}_\rho(\psi)$ is unchanged and still $\ketbra{+}{+}$.
If we set the state $\rho=\ketbra{1}{1}$, the output state will be $\ketbra{-}{-}$ as the unitary $e^{i\rho t}=e^{i\ketbra{1}{1}\pi}=\ketbra{0}{0}-\ketbra{1}{1}$ introduces a relative phase on $\ket{1}$.
As $D(\ketbra{+}{+},\ketbra{-}{-})=1$, 
given that $\epsilon\ll1$, the left-hand-side of Eq.~\eqref{eq:approx_no-purification_diamond} is lower bounded by some constant.
Therefore, the scaling of $t=\mathcal{O}(\pi^2/\epsilon)=\mathcal{O}(1/\epsilon)$ saturates the lower bounded given by Eq.~\eqref{eq:approx_no-purification_diamond}.

\subsection{Standard quantum limit (proof of Corollary~\ref{coro:sql})}\label{app:proof_sql}

As stated in the corollary, $\hat{f}$ may not be an unbiased estimator of $f(\rho)$, we thus define the expectation value of it to be another function
\begin{equation}
f^\prime(\rho)=\mathbb{E}[\hat{f}].
\end{equation}
By assumption, we define
\begin{equation}
\Delta_f \coloneqq \sup_\rho f(\rho)-\inf_\rho f(\rho).
\end{equation}
Therefore, there exist two state $\rho_+$ and $\rho_-$ satisfying that
\begin{equation}
\abs{f^\prime(\rho_+)-f^\prime(\rho_-)}=\frac{4}{5}\Delta_f.
\end{equation}
We define the worst-case variance to be
\begin{equation}
V_t=\sup_\rho \mathrm{Var}(\hat{f}).
\end{equation}

We will prove this corollary by constructing a measure-and-prepare channel $\mathcal{M}_{\text{mp}}$.
The measure-and-prepare channel takes $t$ copies of the unknown state $\rho$ as the input, estimates the value of the approximation function $f^{\prime}(\rho)$ with $\hat{f}$, 
then produce the single-qubit state vector 
$(\ket{0}+e^{i\alpha\hat{f}}\ket{1})/\sqrt{2}$, where $\alpha={1}/{\Delta_f}$.
According to its definition, $\mathcal{M}_{\text{mp}}$ is a valid quantum channel, i.e., a completely-positive and trace-preserving map satisfying the requirement of Theorem~\ref{thm:approx_no-purification_distance_between_states-new}.
The mathematical definition of $\mathcal{M}_{\text{mp}}$ is
\begin{equation}\label{eq:mp}
\begin{aligned}
\mathcal{M}_{\text{mp}}(\rho^{\otimes t})=&\mathbb{E}\left[\frac{1}{2}(\ket{0}+e^{-i\alpha\hat{f}}\ket{1})(\bra{0}+e^{i\alpha\hat{f}}\bra{1})\right]\\
=&\frac{1}{2}\begin{bmatrix}
    1 & \mathbb{E}[e^{-i\alpha\hat{f}}]\\
    \mathbb{E}[e^{i\alpha \hat{f}}] & 1
\end{bmatrix},
\end{aligned}
\end{equation}
where the expectation is taken over the different outcomes when measuring the $t$ copies of $\rho$.
The uncertainty in $\hat{f}$ would lead to the decrease in the purity as
\begin{equation}
\Tr\left[\mathcal{M}_{\text{mp}}(\rho^{\otimes t})^2\right]=\frac{1}{2}\left(1+\abs{\mathbb{E}[e^{i\alpha\hat{f}}]}^2\right).
\end{equation}
For the next step, 
define $Y\coloneqq \hat{f}-f^{\prime}(\rho)$. We then find $\mathbb{E}[Y]=0$ and $\mathbb{E}[Y^2]=\mathrm{Var}(\hat{f})$.
Assume that $Y\ll 1$, we have
\begin{equation}
\abs{\mathbb{E}[e^{i\alpha\hat{f}}]}=\abs{\mathbb{E}[e^{i\alpha Y}]}\ge \Re\mathbb{E}[e^{i\alpha Y}]\ge 1-\frac{1}{2}\alpha^2\mathbb{E}[Y^2].
\end{equation}
Knowing that 
\begin{equation}
\mathbb{E}[Y^2]=\mathrm{Var}(\hat{f})\le V_t,
\end{equation}
we find that the purity of the output state 
satisfies
\begin{equation}
\Tr\left[\mathcal{M}_{\text{mp}}(\rho^{\otimes t})^2\right]=1-\epsilon\ge 1-\frac{1}{2}\alpha^2V_t.
\end{equation}

Define the target pure state 
\begin{equation}
\begin{aligned}
\psi_\rho&\coloneqq\frac{1}{2}\left(\ket{0}+e^{i\alpha f^{\prime}(\rho)}\ket{1}\right)\left(\bra{0}+e^{i\alpha f^{\prime}(\rho)}\bra{1}\right)\\
&=\frac{1}{2}\begin{bmatrix}
    1 & e^{-i\alpha f^{\prime}(\rho)}\\
    e^{i \alpha f^{\prime}(\rho)} & 1
\end{bmatrix}.
\end{aligned}
\end{equation}
The trace distance between the target pure state and the actual state shown in Eq.~\eqref{eq:mp} is
\begin{equation}
D\left(\mathcal{M}_{\mathrm{mp}}(\rho^{\otimes t}),\psi_\rho\right)=\frac{1}{2}\abs{\mathbb{E}[e^{i\alpha\hat{f}}]-e^{i\alpha f^{\prime}(\rho)}}=\frac{1}{2}\abs{\mathbb{E}[e^{i\alpha Y}-1]}.
\end{equation}
With the relation that $\abs{e^{iu}-1}\le\abs{u}$ and the Cauchy-Schwarz inequality, we can bound the distance as
\begin{equation}
D\left(\mathcal{M}_{\mathrm{mp}}(\rho^{\otimes t}),\psi_\rho\right)\le\frac{1}{2}\mathbb{E}\abs{[e^{i\alpha Y}-1]}\le\frac{1}{2}\mathbb{E}[\abs{\alpha Y}]\le\frac{1}{2}\alpha\sqrt{V_t}.
\end{equation}
As mentioned before, one can always pick two states $\rho_+$ and $\rho_-$ such that $\abs{f^\prime(\rho_+)-f^\prime(\rho_-)}=\frac{4}{5}\Delta_f$ and thus 
\begin{equation}
D(\psi_{\rho_+},\psi_{\rho_-})=\Omega(1).
\end{equation}
According to the triangle inequality, we have 
\begin{equation}\label{eq:distance_lower_bound}
\begin{aligned}
&D\left(\mathcal{M}_{\mathrm{mp}}(\rho_+^{\otimes t}),\mathcal{M}_{\mathrm{mp}}(\rho_-^{\otimes t})\right)\\
\ge&D\left(\psi_{\rho_+},\psi_{\rho_-}\right)-D\left(\mathcal{M}_{\mathrm{mp}}(\rho_+^{\otimes t}),\psi_{\rho_+}\right)-D\left(\mathcal{M}_{\mathrm{mp}}(\rho_-^{\otimes t}),\psi_{\rho_-}\right)\\
\ge&\Omega( 1-\alpha\sqrt{V_t}).
\end{aligned}
\end{equation}

According to the approximate purification theorem shown in Theorem~\ref{thm:approx_no-purification_distance_between_states-new}, there should exist a fixed state $\sigma_0$ such that for all state $\rho$, we have
\begin{equation}
D\left(\sigma_0,\mathcal{M}_{\mathrm{mp}}(\rho^{\otimes t})\right)\le(8\pi\sqrt{2t}+12)\sqrt{\epsilon}.
\end{equation}
In particular, 
\begin{equation}\label{eq:distance_upper_bound}
\begin{aligned}
&D\left(\mathcal{M}_{\mathrm{mp}}(\rho_+^{\otimes t}),\mathcal{M}_{\mathrm{mp}}(\rho_-^{\otimes t})\right)\\
\le&D\left(\mathcal{M}_{\mathrm{mp}}(\rho_+^{\otimes t}),\sigma_0\right)+D\left(\sigma_0,\mathcal{M}_{\mathrm{mp}}(\rho_-^{\otimes t})\right)\\
\le & 2(8\pi\sqrt{2t}+12)\sqrt{\epsilon}\\
\le &(8\pi\sqrt{2t}+12)\alpha\sqrt{2V_t}.
\end{aligned}
\end{equation}
Combining Eq.~\eqref{eq:distance_lower_bound} and Eq.~\eqref{eq:distance_upper_bound}, we can derive the fundamental lower bound on the variance
\begin{equation}
V_t=\sup_\rho\mathrm{Var}(\hat{f})=\Omega(\Delta_f^2/t).
\end{equation}

\section{Approximate pure dilation (proof of Theorem~\ref{thm:no_approx_dilation})}\label{app:no_approx_purification}

Let $[d]\coloneqq\{1,\dots,d\}$ and let $\Delta_d\coloneqq\{p=(p_1,\dots,p_d): p_i\ge 0,\sum_{i=1}^d p_i=1\}$ denote the $d$-dimensional probability simplex. 
Given any distribution $p \in \Delta_d$, define the diagonal (classical) state
\begin{equation}
    \rho(p)\coloneqq \sum_{i=1}^d p(i)\,\ketbra{i}{i}.
\end{equation}
For a string $z^t\in[d]^t$, define the (classical) distribution
\begin{equation}
    q(z^t)\coloneqq \mathrm{Diag}\bigl(\Tr_B\bigl[ \mathcal{M}(\ketbra{z^t}{z^t})\bigr] \bigr).
\end{equation}

\begin{lemma}[Classical strings]\label{lem:distribution_strings}
For any distribution $p \in \Delta_d$,
\begin{equation}
    \mathbb{E}_{z^t\sim p^{\otimes t}}\left[\mathrm{TV}\bigl(q(z^t),p\bigr)\right]\le \sqrt{\epsilon}+\epsilon+\eta,
\end{equation}
where $\mathrm{TV}(\cdot,\cdot)$ denotes the total variation distance.
\end{lemma}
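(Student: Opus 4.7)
The plan is to exploit the classical decomposition of $\rho(p)^{\otimes t}$ in the computational basis. Writing $\sigma_\rho \coloneqq \mathcal{M}(\rho(p)^{\otimes t})$ and using linearity of $\mathcal{M}$ on $\rho(p)^{\otimes t} = \sum_{z^t} p^{\otimes t}(z^t)\,\ketbra{z^t}{z^t}$, I would express $\sigma_\rho$ as the convex mixture $\sigma_\rho = \mathbb{E}_{z^t \sim p^{\otimes t}}[\sigma_{z^t}]$ with $\sigma_{z^t} \coloneqq \mathcal{M}(\ketbra{z^t}{z^t})$. By the near-purity hypothesis and Lemma~\ref{lem:pure_trace_distance}, $\sigma_\rho$ is $\epsilon$-close in trace distance to a single pure state $\ketbra{\phi_\rho}{\phi_\rho}$, where $\ket{\phi_\rho}$ is a top eigenvector of $\sigma_\rho$ with eigenvalue $\lambda_1 \ge \Tr[\sigma_\rho^2] \ge 1-\epsilon$.

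The key analytic step is to show that, in the decomposition above, the components $\sigma_{z^t}$ are on average close to this same pure state. From $\bra{\phi_\rho}\sigma_\rho\ket{\phi_\rho} = \lambda_1 \ge 1-\epsilon$ and linearity, $\mathbb{E}_{z^t}[\bra{\phi_\rho}\sigma_{z^t}\ket{\phi_\rho}] \ge 1-\epsilon$. Combining the Fuchs--van de Graaf bound $D(\tau,\ketbra{\phi}{\phi}) \le \sqrt{1-\bra{\phi}\tau\ket{\phi}}$ with Jensen's inequality applied to the concave square root yields $\mathbb{E}_{z^t}[D(\sigma_{z^t},\ketbra{\phi_\rho}{\phi_\rho})] \le \sqrt{\epsilon}$.

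To convert this into a statement about the classical distributions $q(z^t)$ and $p$, I would invoke monotonicity of the trace distance under CPTP maps twice. Taking the partial trace over $B$ gives $\mathbb{E}_{z^t}[D(\Tr_B \sigma_{z^t}, \rho_\phi)] \le \sqrt{\epsilon}$, where $\rho_\phi \coloneqq \Tr_B\ketbra{\phi_\rho}{\phi_\rho}$. The marginal accuracy hypothesis, together with Lemma~\ref{lem:pure_trace_distance} and a triangle inequality, gives $D(\rho_\phi, \rho) \le \epsilon + \eta$, so a second triangle inequality produces $\mathbb{E}_{z^t}[D(\Tr_B \sigma_{z^t}, \rho)] \le \sqrt{\epsilon} + \epsilon + \eta$. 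Finally, since $\rho$ is diagonal in the computational basis, the completely dephasing channel in that basis is CPTP, fixes $\rho$, and sends $\Tr_B \sigma_{z^t}$ to a diagonal state whose entries are precisely $q(z^t)$; contractivity then yields $\mathrm{TV}(q(z^t),p) \le D(\Tr_B\sigma_{z^t},\rho)$, completing the proof. The main obstacle is the convexity-based reduction from near-purity of the mixture to an average-case fidelity bound on each component; once that is in place, the rest is bookkeeping with triangle inequalities and data processing.
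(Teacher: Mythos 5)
Your proposal is correct and follows essentially the same route as the paper's proof: decompose $\rho(p)^{\otimes t}$ into computational-basis strings, use linearity of the pure-state fidelity $\bra{\phi_\rho}\cdot\ket{\phi_\rho}$ to get an average fidelity bound of $1-\epsilon$ for the components, convert to an average trace-distance bound of $\sqrt{\epsilon}$ via Fuchs--van de Graaf and Jensen, and then chain triangle inequalities and data processing (partial trace and dephasing) with the near-purity and marginal-accuracy hypotheses to reach $\sqrt{\epsilon}+\epsilon+\eta$. The only cosmetic differences are that you apply the dephasing at the end rather than immediately after the partial trace, and you obtain the fidelity bound directly from $\bra{\phi_\rho}\sigma_\rho\ket{\phi_\rho}=\lambda_1\ge\Tr[\sigma_\rho^2]$ rather than via $1-F\le D\le\epsilon$; neither affects the argument.
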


\begin{proof}
Set $\rho\coloneqq\rho(p), \sigma\coloneqq \mathcal{M}(\rho^{\otimes t})$. By Eq.~\eqref{eq:closeness-eigenvalue-mixed-state}, for the dominant eigenvector $\ket{\phi_{\rho}}_{A,B}$ of $\sigma$, we have
\begin{equation}\label{eq:dominant_overlap}
    1 - F\bigl(\sigma,\ket{\phi_\rho}\bigr) = D\bigl(\sigma,\ket{\phi_\rho}\bigr) \le \epsilon.
\end{equation}

Decomposing the input state as a classical mixture $\rho^{\otimes t} = \bE_{z^t \sim p^{\otimes t}} [\ketbra{z^t}]$, the linearity of $\mathcal{M}$ implies $\sigma = \bE_{z^t \sim p^{\otimes t}} [\mathcal{M}(\ketbra{z^t})]$. Since the fidelity with a pure state is linear in its first argument, we have $F(\sigma, \ket{\phi_\rho}) = \bE_{z^t}[F(\mathcal{M}(\ketbra{z^t}), \ket{\phi_\rho})]$. Substituting this into Eq.~\eqref{eq:dominant_overlap} yields
\begin{equation}
    \mathbb{E}_{z^t}\bigl[1 - F\bigl(\mathcal{M}(\ketbra{z^t}{z^t}),\,\ket{\phi_\rho}\bigr)\bigr] \le \epsilon.
\end{equation}
For each $z^t$, the Fuchs–van de Graaf inequality with a 
pure state implies
\begin{equation}
    D\bigl(\mathcal{M}(\ketbra{z^t}{z^t}),\,\ket{\phi_\rho}\bigr) \le \sqrt{1 - F\bigl(\mathcal{M}(\ketbra{z^t}{z^t}),\,\ket{\phi_\rho}\bigr)}.
\end{equation}
By Jensen’s inequality, we get
\begin{equation}
\begin{aligned}
    &\mathbb{E}_{z^t}\Bigl[D\bigl(\mathcal{M}(\ketbra{z^t}{z^t}),\,\ket{\phi_\rho}\bigr)\Bigr]\\
    \le& \mathbb{E}_{z^t}\Bigl[\sqrt{1 - F\bigl(\mathcal{M}(\ketbra{z^t}{z^t}),\,\ket{\phi_\rho}\bigr)}\Bigr]\\
    \le& \sqrt{\mathbb{E}_{z^t}\bigl[1 - F\bigl(\mathcal{M}(\ketbra{z^t}{z^t}),\,\ket{\phi_\rho}\bigr)\bigr]}\\
    \le& \sqrt{\epsilon}.
\end{aligned}
\end{equation}
Define $\phi_{p,A}\coloneqq \mathrm{Diag}\bigl(\Tr_B(\ketbra{\phi_\rho}{\phi_\rho})\bigr)$. By data processing (partial trace followed by dephasing),
\begin{equation}
    \mathbb{E}_{z^t}\bigl[\mathrm{TV}\bigl(q(z^t),\phi_{p,A}\bigr)\bigr] \le \sqrt{\epsilon}.
\end{equation}
Let $\bar q \coloneqq \mathbb{E}_{z^t} q(z^t) = \mathrm{Diag}\bigl(\Tr_B(\sigma)\bigr)$. By the near-purity condition and data processing,
\begin{equation}
    \mathrm{TV}\bigl(\bar q,\phi_{p,A}\bigr) \le \epsilon.
\end{equation}
By the marginal-accuracy condition and data processing,
\begin{equation}
    \mathrm{TV}\bigl(\bar q,p\bigr) \le \eta.
\end{equation}
Finally, by the triangle inequality,
\begin{equation}
\begin{aligned}
    &\mathbb{E}_{z^t}\bigl[\mathrm{TV}\bigl(q(z^t),p\bigr)\bigr]\\
    \le &\mathbb{E}_{z^t}\bigl[\mathrm{TV}\bigl(q(z^t),\phi_{p,A}\bigr)\bigr] + \mathrm{TV}\bigl(\phi_{p,A},\bar q\bigr) + \mathrm{TV}\bigl(\bar q,p\bigr)\\
    \le& \sqrt{\epsilon} + \epsilon + \eta.
\end{aligned}
\end{equation}
\end{proof}

We now view $q$ as a function $q:[d]^t\to\Delta_d$. Intuitively, when $t$ is small, any function based on $t$ samples drawn from $p$ cannot produce a good estimator of the underlying distribution $p$. This is a well-studied problem in distribution estimation, and we use the following bound~{\cite[Theorem 1]{Han2015Minimax}}:

\begin{lemma}[Minimax estimation of distributions]\label{lem:minimax_estimation}
    For any function $f:[d]^t\to\Delta_d$ and $t > \frac{ed}{32}$, 
    \begin{equation}
    \begin{aligned}
        &\sup_{p \in \Delta_d} \mathbb{E}_{z^t \sim p^{\otimes t}} \norm{f(z^t) - p}_1 \\
        \ge & \frac{1}{8}\sqrt{\frac{ed}{2t}} - \exp(-\frac{t}{24}) - 12 \exp(-\frac{d}{32 (\ln d)^2}).
    \end{aligned}
    \end{equation}
\end{lemma}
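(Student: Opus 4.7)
This is a classical minimax lower bound for discrete distribution estimation in $\ell_1$ loss; the plan is to derive it via the standard statistical lower-bound toolkit: Poissonization, Le Cam's two-point method, and Assouad's lemma. The minimax risk is reduced to Bayes risk against a carefully constructed prior, and the two branches of the claimed bound will come from two different priors tailored to the regimes $t\le ed/32$ and $t>ed/32$, respectively.

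First I would Poissonize: replace the deterministic sample size $t$ by $T\sim\mathrm{Poisson}(t)$, so that the counts $N_i\sim\mathrm{Poisson}(tp_i)$ become independent across coordinates and the analysis decouples. Any lower bound established in this setting transfers back to fixed $t$ up to a Poisson concentration error $\Pr[|T-t|>t/2]=O(\exp(-t/6))$, which absorbs the $\exp(-t/6)$ correction term in the statement. For the large-$t$ regime ($t>ed/32$), I would use an Assouad hypercube prior: pair coordinates into $d/2$ blocks and parametrize $p_\tau(2i-1)=(1+\tau_i\epsilon)/d$, $p_\tau(2i)=(1-\tau_i\epsilon)/d$ for $\tau\in\{\pm 1\}^{d/2}$. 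With $\epsilon\asymp\sqrt{d/t}$, neighbor-pair KL distances are $O(\epsilon^2)=O(d/t)$, so Assouad's lemma lower-bounds the expected Hamming error on $\tau$ by a constant fraction of $d/2$, yielding the $\tfrac{1}{8}\sqrt{ed/(2t)}$ lower bound with the claimed constants.

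For the small-$t$ regime ($t\le ed/32$), the bound $\exp(-4t/d)$ reflects unseen-mass hardness: with $t\ll d$ samples the empirical estimate captures only a fraction $\sim 1-e^{-t/d}$ of the true support. I would consider two priors $\pi_0,\pi_1$ over perturbed near-uniform distributions, where the perturbations are supported on random subsets of size $\Theta(d/\ln d)$. Because $t$ samples miss any such subset with probability $\exp(-\Theta(t\ln d/d))$, the Bayes-optimal tester separating $\pi_0$ from $\pi_1$ has error approaching $1$, producing an $\ell_1$ estimation floor of order $\exp(-4t/d)$. The $12\exp(-d/(8(\ln d)^2))$ correction absorbs a union bound over sub-Gaussian concentration events that control the random support sizes, with the $(\ln d)^2$ factor arising from the sub-Gaussian tail of the binomial on those supports.

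The main obstacle is nailing the explicit constants---the $4$ in $\exp(-4t/d)$, the $1/8$ and $e$ in the large-$t$ branch, and the transition threshold $ed/32$---which all hinge on sharp Hellinger- and $\chi^2$-distance calculations under the two priors and a careful optimization of the support size in the small-$t$ construction. A secondary technical hurdle is combining the Poissonization error and the support-concentration union bound without loosening either correction term, a delicate accounting step typical of tight minimax analyses in this parameter range.
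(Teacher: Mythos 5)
The first thing to note is that the paper does not prove this lemma: it is imported verbatim, constants and all, as Theorem~1 of Ref.~\cite{Han2015Minimax}, and is used as a black box in the proof of Theorem~\ref{thm:no_approx_purification}. So there is no in-paper argument to compare against, and the only question is whether your sketch would independently establish the stated bound. In outline, your large-$t$ branch follows the standard playbook that the cited work also uses---Poissonization (whence the $\exp(-t/6)$ correction) plus an Assouad hypercube prior $p_\tau(2i-1)=(1+\tau_i\epsilon)/d$, $p_\tau(2i)=(1-\tau_i\epsilon)/d$ with $\epsilon\asymp\sqrt{d/t}$---and that part is sound in spirit. But every explicit constant ($1/8$, the $\sqrt{ed/2t}$, the threshold $ed/32$, the coefficient $12$ and the $(\ln d)^2$ in the last correction) is asserted rather than derived, and these constants are precisely what propagate into the final bound $t\ge ed/(512\xi^2)$, so a sketch that defers them is not yet a proof of the lemma as stated.

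The concrete gap is in the small-$t$ branch. The construction you describe---perturbations supported on random subsets of size $\Theta(d/\ln d)$, which $t$ samples miss with probability $\exp(-\Theta(t\ln d/d))$---cannot produce the exponent $\exp(-4t/d)$: the probability of missing a subset of size $s$ scales as $(1-s/d)^t\approx\exp(-ts/d)$, which for $s=\Theta(d/\ln d)$ is $\exp(-\Theta(t/\ln d))$, a different (and much smaller) quantity; you then simply assert the $\exp(-4t/d)$ floor without connecting it to this construction. Note that in the regime $t\le ed/32$ the claimed bound is a constant at least $\exp(-e/8)\approx 0.71$, and the natural way to obtain it is to run the \emph{same} paired-coordinate prior at extremal perturbation amplitude and compute the product of per-pair affinities between the resulting Poisson laws, rather than to switch to a separate unseen-support argument. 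Likewise, your attribution of the $12\exp(-d/(8(\ln d)^2))$ term to ``control of random support sizes'' is a guess that does not follow from anything you set up. Since the lemma is a published result, the efficient move is to cite it as the paper does; reproving it with matching constants is a substantial exercise in its own right, and your proposal as written does not close either branch.
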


\begin{proof}[Proof of Theorem \ref{thm:no_approx_dilation}]
Lemma \ref{lem:distribution_strings} tells us that
\begin{equation}
    \delta_t \coloneqq \sup_{p\in\Delta_d}\, \mathbb{E}_{z^t\sim p^{\otimes t}} \norm{q(z^t)-p}_1 \le2( \sqrt{\epsilon}+\epsilon+\eta) \eqqcolon \xi.
\end{equation}
Lemma \ref{lem:minimax_estimation} tells us that for $t > \frac{ed}{32} $,  
\begin{equation}
\begin{split}
    \delta_{t} &\ge \frac{1}{8}\sqrt{\frac{ed}{2t}} - \exp\!\left(-\frac{t}{24}\right) - 12\exp\!\left(-\frac{d}{32(\ln d)^2}\right) \\
    &\ge \frac{1}{8}\sqrt{\frac{ed}{2t}} - 13\exp\!\left(-\frac{d}{32(\ln d)^2}\right),
\end{split}
\end{equation}
where the last equation holds for $d \ge 20$. 
Combining with $\delta_t\le \xi$, we obtain
\begin{equation}
    \xi + 13\exp\left(-\frac{d}{32(\ln d)^2}\right) \ge \frac{1}{8}\sqrt{\frac{ed}{2t}},  
\end{equation}
which gives the desired lower bound
\begin{equation}
    t \ge \frac{ed}{128 \left[\xi + 13\exp\left(-\frac{d}{32(\ln d)^2}\right)\right]^2}.
\end{equation}
Next, note that $\delta_t \ge \delta_m$ for $t\le m$. Hence, for any $t\le \frac{ed}{32}$,
\begin{equation}
    \delta_t \ge \delta_{ed/32 + 1} \ge  \frac{1}{2} - o(1).
\end{equation}
Thus, for sufficiently large $d$ and any $\xi \le \frac{1}{3}$, one cannot have $\delta_t \le \xi$ when $t\le \frac{ed}{32}$. This completes the proof.
\end{proof}

\section{Gaussian purifications}\label{app:Gaussian}
\subsection{Preliminaries of Gaussian purification}

In this section, we discuss notions of purification for bosonic Gaussian quantum states. Given the central role of such states and operations, this focus is well motivated. 
They feature strongly in optical implementations -- after all, light is captured by bosonic modes of this kind -- but also in 
many other physical settings, such as those in which motional degrees of freedom play a role.
Purifications are particularly important in the study of quantum thermodynamics, where they underpin many structural results. Moreover, they have featured prominently in recent work on random purifications, and for good reason.
Here, we present the proofs of the statements given in the main text. As before, we consider 
$t$ copies of quantum systems, each consisting of 
$k$ bosonic modes. Altogether, the input therefore comprises $kt$ modes. Such systems are captured by $2kt$ canonical operators $(x_1,p_1, \dots, x_{kt}, p_{kt})$ featuring canonical commutation relations that can be captured by the
skew-symmetric symplectic matrix
\begin{equation}
	\sigma\coloneq \bigoplus_{j=1}^{kt} \left[
	\begin{array}{cc}
		0 & 1\\
		-1 & 0 \\
	\end{array}
	\right].
\end{equation}	
We consider also symplectic matrix of $k$ output modes, again denoted by $\sigma$.
For such systems, we consider Gaussian quantum states, which are by far the most important class of states. They are completely characterized by first moments, which 
are not relevant here and will without loss of generality be set to zero, and second moments that can be collected in a real symmetric \emph{covariance matrix}
$\Gamma \in \mathbb{R}^{2kt\times 2kt}$. The specific input of the form $\rho^{\otimes t}$ with $\rho$ being a Gaussian quantum state is on the  level of second moment characterized by a direct sum
$\gamma^{\oplus t}$, 
so a block diagonal matrix of $2k\times 2k$ blocks each.

We now turn to the completely
positive maps ${\cal M}$ considered here. They are naturally \emph{Gaussian channels}, so 
completely positive trace-preserving maps that map Gaussian quantum states onto Gaussian states. 
For the specific case at hand,  taking systems of $kt$ modes as input and having individual systems of $k$ modes as output,
they act on covariance matrices as
\begin{equation}
\gamma\mapsto \Gamma = A\gamma^{\oplus t} A^T + B,
\end{equation}
where $B \in \mathbb{R}^{2k\times 2k}$ is real and symmetric, while
$A \in \mathbb{R}^{2k\times 2kt}$ is real and has no further structure.
Complete positivity requires that
\begin{equation}\label{CP}
B+ i\sigma - i A \sigma A^T \geq 0.
\end{equation}
Important special cases are Gaussian unitary channels, for which $B=0$ and $A$ reflects a symplectic transformation, 
which means that (for suitable dimensions of the symplectic form) $A\sigma A^T = \sigma$. Important are also
\emph{passive Gaussian channels}, which are realized as a dilation involving passive transformations, which are not only symplectic transformations but also orthogonal ones and where the environment is initially prepared in the vacuum. Such passive transformations are of key importance, as they reflect the non-squeezing channels and can be seen as arising from linear optical networks. For those, $\|A\|\leq 1$ and $B\leq 1$.

\subsection{Proof of Observation \ref{obs:no-Gaussian-purification}: Exact Gaussian purification}

We are now in the position to prove the actual observation. 
We first show that having an exactly pure output requires replacing the output by a fixed Gaussian
quantum state. Let $\gamma$ be the covariance matrix of a strictly mixed quantum state $\rho$ of $k$ modes. The output of ${\cal M}$ has the covariance matrix
\begin{equation}\label{Out}
\Gamma = A\gamma^{\oplus t} A^T + B. 
\end{equation}
This reflects a pure state exactly if all 
symplectic eigenvalues of $\Gamma$ are $1$,
which means that the standard positive eigenvalues of the matrix $i \sigma \Gamma$, the \emph{symplectic eigenvalues}, 
are given by $1$. Now we can conclude from complete positivity captured in Eq.\ (\ref{CP}) 
 that $B \geq -i\sigma+ i A \sigma A^T$, which combined with Eq.\ (\ref{Out}) gives
\begin{equation}
\Gamma + i \sigma \geq A(\gamma^{\oplus t} + i\sigma) A^T .
\end{equation}
A defining property of covariance matrices is that $\gamma^{\oplus t} + i\sigma\geq 0$: This is actually nothing but the mathematically 
precise formulation of the
Heisenberg uncertainty principle for the given family input states. Since $\Gamma$ is the covariance matrix of a pure Gaussian state,
there exists a symplectic transformation $S$ satisfying $S\sigma S^T = \sigma$ so that 
\begin{equation}
\mathbb{I} + i \sigma \geq S A(\gamma^{\oplus t} + i\sigma) A^T S^T  \geq 0
\end{equation}
must be true. The expression in the left hand side is also maximally rank-deficient, in that half of the eigenvalues are zero.
This means that half the eigenvalues of 
$S A(\gamma^{\oplus t} + i\sigma) A^TS^T $ are also zero. Since $\gamma^{\oplus t} + i\sigma$ has no 
symplectic eigenvalue $1$, this means that $SA(\gamma^{\oplus t} + i\sigma) A^TS^T = A(\gamma^{\oplus t} + i\sigma) A^T=0$. 
But this means that $B$ is the covariance matrix of a pure Gaussian state, 
so that the symplectic eigenvalues of $B$ are all $1$. This  again means that
\begin{equation}
{\cal M}(\rho^{\otimes t}) =\Psi,
\end{equation}
with a pure Gaussian state $\Psi$, 
independent of $\rho$.

\subsection{Proof of Theorem \ref{thm:approx_no-purification_distance_between_states-Gaussian}: Approximate Gaussian purification}

We now turn to proving the statement on approximate Gaussian purification.
We take as the covariance matrix of the reference Gaussian quantum state $\Psi$ the matrix
\begin{equation}
\xi:= B+ A A^T .
\end{equation}
This is clearly a real symmetric matrix, and the Heisenberg uncertainty
principle $\xi+ i\sigma$ is satisfied because of
\begin{eqnarray}
B+ A A^T + i\sigma &\geq& 
B+ A A^T + i\sigma -  A (\mathbb{I}+ i\sigma ) A^T \\
&\geq& 0.\nonumber
\end{eqnarray}
The difference between the output covariance matrix $\Gamma$ and the reference one $\xi$
is given by
\begin{equation}
A \gamma^{\oplus t} A^T +B - (B +A A^T) = A(\gamma^{\oplus t} -\mathbb{I})A^T=: D.
\end{equation}
which will be made use of soon. This matrix is positive semi-definite, by assumption that 
$\gamma\geq \mathbb{I}$.
We now turn to discussing the
purity of the output state $\omega= {\cal M}(\rho^{\otimes t})$, which is given by ${\rm tr}(\omega^2)$.
Making use of a standard formula for the
purity expressed in terms of covariance matrices, one finds that 
the condition of a large purity ${\rm tr}(\omega^2)\geq 1-\epsilon$ 
implies that
\begin{equation}
{\rm tr}(\omega^2) = \frac{1}{\det(\Gamma)^{1/2}}\geq 1-\epsilon.
\end{equation}
From this, it follows that
\begin{equation}
\det(\Gamma)^{1/2} \leq 1+ 2\epsilon.
\end{equation}
Since this is nothing but
\begin{equation}
\det(A \gamma^{\oplus t} A^T +B)^{1/2} \leq 1+ 2\epsilon,
\end{equation}
and hence one finds
\begin{equation}
\det(A (\gamma^{\oplus t} -\mathbb{I}) A^T +\xi)^{1/2} \leq 1+ 2\epsilon,
\end{equation}
which is
\begin{equation}
\det(D +\xi) \leq (1+ 2\epsilon)^2.
\end{equation}
To progress, note that
\begin{equation}
\tr(D +\xi) \leq  \tr(\xi)-\|\xi\| + \frac{(1+ 2\epsilon)^2 \| \xi\|}{\det(\xi)}.
\end{equation}
Subtracting $\tr(\xi) $ yields
\begin{equation}
\tr(D ) \leq   \frac{(1+ 2\epsilon)^2 \| \xi\|}{\det(\xi)}-\|\xi\|.
\end{equation}
This gives
\begin{equation}
\|D \|_1 \leq  \| \xi\|
\left(
\frac{(1+ 2\epsilon)^2}{\det(\xi)}-1
\right).
\end{equation}
Now, since $\det(\xi)\geq 1$, as is true for every covariance matrix of a quantum state,
\begin{equation}
\|D \|_1 \leq  8\epsilon \| \xi\| \leq 16
 \epsilon ,
 \end{equation}
as $0\leq \epsilon< 1$. For the final bound, we also need to upper bound the operator norm of the sum of the 
output and the given covariance matrix,
given by
\begin{equation}
\| \xi+\Gamma  \| =  \| \xi +A\gamma^{\oplus t} A^T +   B\|=  \| 2 \xi +A(\gamma^{\oplus t} -\mathbb{I}) A^T  \|.
 \end{equation}
The triangle 
inequality gives
\begin{equation}
\| \xi+\Gamma  \| \leq 2  \|  \xi \| +\| D  \|.
 \end{equation}
 For the passive Gaussian channel at hand,  $\|B\|\leq 1$. And $\|A A^T\|\leq 1$. This implies that  $\|  \xi \|\leq 2$, which implies that 
 \begin{equation}
\| \xi+\Gamma  \| \leq 4 +\| D  \| \leq 4+ 16\epsilon.
 \end{equation}
 Then, finally, invoking the results of Ref.\ \cite{Bittel2025optimalestimatesof}, relating closeness in trace distance 
 to closeness in second moments, we find
 \begin{eqnarray}
 	D(\omega, \Psi) &\leq &
	\frac{1+\sqrt{3} }{16}\|D\|_1 \, \|	\xi+\Gamma  \|\nonumber\\
	&\leq &4(1+\sqrt{3}) \epsilon (1+4 \epsilon).
 \end{eqnarray}


\begin{thebibliography}{62}%
\makeatletter
\providecommand \@ifxundefined [1]{%
 \@ifx{#1\undefined}
}%
\providecommand \@ifnum [1]{%
 \ifnum #1\expandafter \@firstoftwo
 \else \expandafter \@secondoftwo
 \fi
}%
\providecommand \@ifx [1]{%
 \ifx #1\expandafter \@firstoftwo
 \else \expandafter \@secondoftwo
 \fi
}%
\providecommand \natexlab [1]{#1}%
\providecommand \enquote  [1]{``#1''}%
\providecommand \bibnamefont  [1]{#1}%
\providecommand \bibfnamefont [1]{#1}%
\providecommand \citenamefont [1]{#1}%
\providecommand \href@noop [0]{\@secondoftwo}%
\providecommand \href [0]{\begingroup \@sanitize@url \@href}%
\providecommand \@href[1]{\@@startlink{#1}\@@href}%
\providecommand \@@href[1]{\endgroup#1\@@endlink}%
\providecommand \@sanitize@url [0]{\catcode `\\12\catcode `\$12\catcode `\&12\catcode `\#12\catcode `\^12\catcode `\_12\catcode `\%12\relax}%
\providecommand \@@startlink[1]{}%
\providecommand \@@endlink[0]{}%
\providecommand \url  [0]{\begingroup\@sanitize@url \@url }%
\providecommand \@url [1]{\endgroup\@href {#1}{\urlprefix }}%
\providecommand \urlprefix  [0]{URL }%
\providecommand \Eprint [0]{\href }%
\providecommand \doibase [0]{https://doi.org/}%
\providecommand \selectlanguage [0]{\@gobble}%
\providecommand \bibinfo  [0]{\@secondoftwo}%
\providecommand \bibfield  [0]{\@secondoftwo}%
\providecommand \translation [1]{[#1]}%
\providecommand \BibitemOpen [0]{}%
\providecommand \bibitemStop [0]{}%
\providecommand \bibitemNoStop [0]{.\EOS\space}%
\providecommand \EOS [0]{\spacefactor3000\relax}%
\providecommand \BibitemShut  [1]{\csname bibitem#1\endcsname}%
\let\auto@bib@innerbib\@empty
\bibitem [{\citenamefont {Wootters}\ and\ \citenamefont {Zurek}(1982)}]{Wootters1982}%
  \BibitemOpen
  \bibfield  {author} {\bibinfo {author} {\bibfnamefont {W.~K.}\ \bibnamefont {Wootters}}\ and\ \bibinfo {author} {\bibfnamefont {W.~H.}\ \bibnamefont {Zurek}},\ }\bibfield  {title} {\bibinfo {title} {A single quantum cannot be cloned},\ }\href {https://doi.org/10.1038/299802a0} {\bibfield  {journal} {\bibinfo  {journal} {Nature}\ }\textbf {\bibinfo {volume} {299}},\ \bibinfo {pages} {802} (\bibinfo {year} {1982})}\BibitemShut {NoStop}%
\bibitem [{\citenamefont {Kumar~Pati}\ and\ \citenamefont {Braunstein}(2000)}]{KumarPati2000deleting}%
  \BibitemOpen
  \bibfield  {author} {\bibinfo {author} {\bibfnamefont {A.}~\bibnamefont {Kumar~Pati}}\ and\ \bibinfo {author} {\bibfnamefont {S.~L.}\ \bibnamefont {Braunstein}},\ }\bibfield  {title} {\bibinfo {title} {Impossibility of deleting an unknown quantum state},\ }\href {https://doi.org/10.1038/404130b0} {\bibfield  {journal} {\bibinfo  {journal} {Nature}\ }\textbf {\bibinfo {volume} {404}},\ \bibinfo {pages} {164} (\bibinfo {year} {2000})}\BibitemShut {NoStop}%
\bibitem [{\citenamefont {Barnum}\ \emph {et~al.}(1996)\citenamefont {Barnum}, \citenamefont {Caves}, \citenamefont {Fuchs}, \citenamefont {Jozsa},\ and\ \citenamefont {Schumacher}}]{barnum1996broadcasting}%
  \BibitemOpen
  \bibfield  {author} {\bibinfo {author} {\bibfnamefont {H.}~\bibnamefont {Barnum}}, \bibinfo {author} {\bibfnamefont {C.~M.}\ \bibnamefont {Caves}}, \bibinfo {author} {\bibfnamefont {C.~A.}\ \bibnamefont {Fuchs}}, \bibinfo {author} {\bibfnamefont {R.}~\bibnamefont {Jozsa}},\ and\ \bibinfo {author} {\bibfnamefont {B.}~\bibnamefont {Schumacher}},\ }\bibfield  {title} {\bibinfo {title} {Noncommuting mixed states cannot be broadcast},\ }\href {https://doi.org/10.1103/PhysRevLett.76.2818} {\bibfield  {journal} {\bibinfo  {journal} {Phys. Rev. Lett.}\ }\textbf {\bibinfo {volume} {76}},\ \bibinfo {pages} {2818} (\bibinfo {year} {1996})}\BibitemShut {NoStop}%
\bibitem [{\citenamefont {Parzygnat}\ \emph {et~al.}(2024)\citenamefont {Parzygnat}, \citenamefont {Fullwood}, \citenamefont {Buscemi},\ and\ \citenamefont {Chiribella}}]{parzygnat2024vqb}%
  \BibitemOpen
  \bibfield  {author} {\bibinfo {author} {\bibfnamefont {A.~J.}\ \bibnamefont {Parzygnat}}, \bibinfo {author} {\bibfnamefont {J.}~\bibnamefont {Fullwood}}, \bibinfo {author} {\bibfnamefont {F.}~\bibnamefont {Buscemi}},\ and\ \bibinfo {author} {\bibfnamefont {G.}~\bibnamefont {Chiribella}},\ }\bibfield  {title} {\bibinfo {title} {Virtual quantum broadcasting},\ }\href {https://doi.org/10.1103/PhysRevLett.132.110203} {\bibfield  {journal} {\bibinfo  {journal} {Phys. Rev. Lett.}\ }\textbf {\bibinfo {volume} {132}},\ \bibinfo {pages} {110203} (\bibinfo {year} {2024})}\BibitemShut {NoStop}%
\bibitem [{\citenamefont {Bennett}\ \emph {et~al.}(1996)\citenamefont {Bennett}, \citenamefont {DiVincenzo}, \citenamefont {Smolin},\ and\ \citenamefont {Wootters}}]{bennett1996distillation}%
  \BibitemOpen
  \bibfield  {author} {\bibinfo {author} {\bibfnamefont {C.~H.}\ \bibnamefont {Bennett}}, \bibinfo {author} {\bibfnamefont {D.~P.}\ \bibnamefont {DiVincenzo}}, \bibinfo {author} {\bibfnamefont {J.~A.}\ \bibnamefont {Smolin}},\ and\ \bibinfo {author} {\bibfnamefont {W.~K.}\ \bibnamefont {Wootters}},\ }\bibfield  {title} {\bibinfo {title} {Mixed-state entanglement and quantum error correction},\ }\href {https://doi.org/10.1103/PhysRevA.54.3824} {\bibfield  {journal} {\bibinfo  {journal} {Phys. Rev. A}\ }\textbf {\bibinfo {volume} {54}},\ \bibinfo {pages} {3824} (\bibinfo {year} {1996})}\BibitemShut {NoStop}%
\bibitem [{\citenamefont {Cirac}\ \emph {et~al.}(1999)\citenamefont {Cirac}, \citenamefont {Ekert},\ and\ \citenamefont {Macchiavello}}]{cirac1999optimal}%
  \BibitemOpen
  \bibfield  {author} {\bibinfo {author} {\bibfnamefont {J.~I.}\ \bibnamefont {Cirac}}, \bibinfo {author} {\bibfnamefont {A.~K.}\ \bibnamefont {Ekert}},\ and\ \bibinfo {author} {\bibfnamefont {C.}~\bibnamefont {Macchiavello}},\ }\bibfield  {title} {\bibinfo {title} {Optimal purification of single qubits},\ }\href {https://doi.org/10.1103/PhysRevLett.82.4344} {\bibfield  {journal} {\bibinfo  {journal} {Phys. Rev. Lett.}\ }\textbf {\bibinfo {volume} {82}},\ \bibinfo {pages} {4344} (\bibinfo {year} {1999})}\BibitemShut {NoStop}%
\bibitem [{\citenamefont {Fiur\'a\ifmmode~\check{s}\else \v{s}\fi{}ek}(2004)}]{fiuraifmmode2004optimal}%
  \BibitemOpen
  \bibfield  {author} {\bibinfo {author} {\bibfnamefont {J.}~\bibnamefont {Fiur\'a\ifmmode~\check{s}\else \v{s}\fi{}ek}},\ }\bibfield  {title} {\bibinfo {title} {Optimal probabilistic cloning and purification of quantum states},\ }\href {https://doi.org/10.1103/PhysRevA.70.032308} {\bibfield  {journal} {\bibinfo  {journal} {Phys. Rev. A}\ }\textbf {\bibinfo {volume} {70}},\ \bibinfo {pages} {032308} (\bibinfo {year} {2004})}\BibitemShut {NoStop}%
\bibitem [{\citenamefont {Terhal}(2015)}]{terhal2015qec}%
  \BibitemOpen
  \bibfield  {author} {\bibinfo {author} {\bibfnamefont {B.~M.}\ \bibnamefont {Terhal}},\ }\bibfield  {title} {\bibinfo {title} {Quantum error correction for quantum memories},\ }\href {https://doi.org/10.1103/RevModPhys.87.307} {\bibfield  {journal} {\bibinfo  {journal} {Rev. Mod. Phys.}\ }\textbf {\bibinfo {volume} {87}},\ \bibinfo {pages} {307} (\bibinfo {year} {2015})}\BibitemShut {NoStop}%
\bibitem [{\citenamefont {Cai}\ \emph {et~al.}(2023)\citenamefont {Cai}, \citenamefont {Babbush}, \citenamefont {Benjamin}, \citenamefont {Endo}, \citenamefont {Huggins}, \citenamefont {Li}, \citenamefont {McClean},\ and\ \citenamefont {O'Brien}}]{cai2023qem}%
  \BibitemOpen
  \bibfield  {author} {\bibinfo {author} {\bibfnamefont {Z.}~\bibnamefont {Cai}}, \bibinfo {author} {\bibfnamefont {R.}~\bibnamefont {Babbush}}, \bibinfo {author} {\bibfnamefont {S.~C.}\ \bibnamefont {Benjamin}}, \bibinfo {author} {\bibfnamefont {S.}~\bibnamefont {Endo}}, \bibinfo {author} {\bibfnamefont {W.~J.}\ \bibnamefont {Huggins}}, \bibinfo {author} {\bibfnamefont {Y.}~\bibnamefont {Li}}, \bibinfo {author} {\bibfnamefont {J.~R.}\ \bibnamefont {McClean}},\ and\ \bibinfo {author} {\bibfnamefont {T.~E.}\ \bibnamefont {O'Brien}},\ }\bibfield  {title} {\bibinfo {title} {Quantum error mitigation},\ }\href {https://doi.org/10.1103/RevModPhys.95.045005} {\bibfield  {journal} {\bibinfo  {journal} {Rev. Mod. Phys.}\ }\textbf {\bibinfo {volume} {95}},\ \bibinfo {pages} {045005} (\bibinfo {year} {2023})}\BibitemShut {NoStop}%
\bibitem [{\citenamefont {Huggins}\ \emph {et~al.}(2021)\citenamefont {Huggins}, \citenamefont {McArdle}, \citenamefont {O'Brien}, \citenamefont {Lee}, \citenamefont {Rubin}, \citenamefont {Boixo}, \citenamefont {Whaley}, \citenamefont {Babbush},\ and\ \citenamefont {McClean}}]{hugginsVirtualDistillationQuantum2021}%
  \BibitemOpen
  \bibfield  {author} {\bibinfo {author} {\bibfnamefont {W.~J.}\ \bibnamefont {Huggins}}, \bibinfo {author} {\bibfnamefont {S.}~\bibnamefont {McArdle}}, \bibinfo {author} {\bibfnamefont {T.~E.}\ \bibnamefont {O'Brien}}, \bibinfo {author} {\bibfnamefont {J.}~\bibnamefont {Lee}}, \bibinfo {author} {\bibfnamefont {N.~C.}\ \bibnamefont {Rubin}}, \bibinfo {author} {\bibfnamefont {S.}~\bibnamefont {Boixo}}, \bibinfo {author} {\bibfnamefont {K.~B.}\ \bibnamefont {Whaley}}, \bibinfo {author} {\bibfnamefont {R.}~\bibnamefont {Babbush}},\ and\ \bibinfo {author} {\bibfnamefont {J.~R.}\ \bibnamefont {McClean}},\ }\bibfield  {title} {\bibinfo {title} {Virtual {{distillation}} for {{quantum error mitigation}}},\ }\href {https://doi.org/10.1103/PhysRevX.11.041036} {\bibfield  {journal} {\bibinfo  {journal} {Phys. Rev. X}\ }\textbf {\bibinfo {volume} {11}},\ \bibinfo {pages} {041036} (\bibinfo {year} {2021})}\BibitemShut {NoStop}%
\bibitem [{\citenamefont {Koczor}(2021)}]{koczorExponentialErrorSuppression2021}%
  \BibitemOpen
  \bibfield  {author} {\bibinfo {author} {\bibfnamefont {B.}~\bibnamefont {Koczor}},\ }\bibfield  {title} {\bibinfo {title} {Exponential error suppression for near-term quantum devices},\ }\href {https://doi.org/10.1103/PhysRevX.11.031057} {\bibfield  {journal} {\bibinfo  {journal} {Phys. Rev. X}\ }\textbf {\bibinfo {volume} {11}},\ \bibinfo {pages} {031057} (\bibinfo {year} {2021})}\BibitemShut {NoStop}%
\bibitem [{\citenamefont {Cotler}\ \emph {et~al.}(2019)\citenamefont {Cotler}, \citenamefont {Choi}, \citenamefont {Lukin}, \citenamefont {Gharibyan}, \citenamefont {Grover}, \citenamefont {Tai}, \citenamefont {Rispoli}, \citenamefont {Schittko}, \citenamefont {Preiss}, \citenamefont {Kaufman}, \citenamefont {Greiner}, \citenamefont {Pichler},\ and\ \citenamefont {Hayden}}]{cotler2019cooling}%
  \BibitemOpen
  \bibfield  {author} {\bibinfo {author} {\bibfnamefont {J.}~\bibnamefont {Cotler}}, \bibinfo {author} {\bibfnamefont {S.}~\bibnamefont {Choi}}, \bibinfo {author} {\bibfnamefont {A.}~\bibnamefont {Lukin}}, \bibinfo {author} {\bibfnamefont {H.}~\bibnamefont {Gharibyan}}, \bibinfo {author} {\bibfnamefont {T.}~\bibnamefont {Grover}}, \bibinfo {author} {\bibfnamefont {M.~E.}\ \bibnamefont {Tai}}, \bibinfo {author} {\bibfnamefont {M.}~\bibnamefont {Rispoli}}, \bibinfo {author} {\bibfnamefont {R.}~\bibnamefont {Schittko}}, \bibinfo {author} {\bibfnamefont {P.~M.}\ \bibnamefont {Preiss}}, \bibinfo {author} {\bibfnamefont {A.~M.}\ \bibnamefont {Kaufman}}, \bibinfo {author} {\bibfnamefont {M.}~\bibnamefont {Greiner}}, \bibinfo {author} {\bibfnamefont {H.}~\bibnamefont {Pichler}},\ and\ \bibinfo {author} {\bibfnamefont {P.}~\bibnamefont {Hayden}},\ }\bibfield  {title} {\bibinfo {title} {Quantum virtual cooling},\ }\href {https://doi.org/10.1103/PhysRevX.9.031013} {\bibfield  {journal} {\bibinfo  {journal} {Phys. Rev.
  X}\ }\textbf {\bibinfo {volume} {9}},\ \bibinfo {pages} {031013} (\bibinfo {year} {2019})}\BibitemShut {NoStop}%
\bibitem [{\citenamefont {Zeng}\ \emph {et~al.}(2021)\citenamefont {Zeng}, \citenamefont {Sun},\ and\ \citenamefont {Yuan}}]{zeng2021universal}%
  \BibitemOpen
  \bibfield  {author} {\bibinfo {author} {\bibfnamefont {P.}~\bibnamefont {Zeng}}, \bibinfo {author} {\bibfnamefont {J.}~\bibnamefont {Sun}},\ and\ \bibinfo {author} {\bibfnamefont {X.}~\bibnamefont {Yuan}},\ }\bibfield  {title} {\bibinfo {title} {Universal quantum algorithmic cooling on a quantum computer},\ }\href {https://arxiv.org/abs/2109.15304} {\bibfield  {journal} {\bibinfo  {journal} {arXiv preprint arXiv:2109.15304}\ } (\bibinfo {year} {2021})}\BibitemShut {NoStop}%
\bibitem [{\citenamefont {An}\ \emph {et~al.}(2023)\citenamefont {An}, \citenamefont {Liu},\ and\ \citenamefont {Lin}}]{andong2023linear}%
  \BibitemOpen
  \bibfield  {author} {\bibinfo {author} {\bibfnamefont {D.}~\bibnamefont {An}}, \bibinfo {author} {\bibfnamefont {J.-P.}\ \bibnamefont {Liu}},\ and\ \bibinfo {author} {\bibfnamefont {L.}~\bibnamefont {Lin}},\ }\bibfield  {title} {\bibinfo {title} {Linear combination of hamiltonian simulation for nonunitary dynamics with optimal state preparation cost},\ }\href {https://doi.org/10.1103/PhysRevLett.131.150603} {\bibfield  {journal} {\bibinfo  {journal} {Phys. Rev. Lett.}\ }\textbf {\bibinfo {volume} {131}},\ \bibinfo {pages} {150603} (\bibinfo {year} {2023})}\BibitemShut {NoStop}%
\bibitem [{\citenamefont {Pan}\ \emph {et~al.}(2001)\citenamefont {Pan}, \citenamefont {Simon}, \citenamefont {Brukner},\ and\ \citenamefont {Zeilinger}}]{Pan2001distillation}%
  \BibitemOpen
  \bibfield  {author} {\bibinfo {author} {\bibfnamefont {J.-W.}\ \bibnamefont {Pan}}, \bibinfo {author} {\bibfnamefont {C.}~\bibnamefont {Simon}}, \bibinfo {author} {\bibfnamefont {C.}~\bibnamefont {Brukner}},\ and\ \bibinfo {author} {\bibfnamefont {A.}~\bibnamefont {Zeilinger}},\ }\bibfield  {title} {\bibinfo {title} {Entanglement purification for quantum communication},\ }\href {https://doi.org/10.1038/35074041} {\bibfield  {journal} {\bibinfo  {journal} {Nature}\ }\textbf {\bibinfo {volume} {410}},\ \bibinfo {pages} {1067} (\bibinfo {year} {2001})}\BibitemShut {NoStop}%
\bibitem [{\citenamefont {Bravyi}\ and\ \citenamefont {Kitaev}(2005)}]{bravyikitaev}%
  \BibitemOpen
  \bibfield  {author} {\bibinfo {author} {\bibfnamefont {S.}~\bibnamefont {Bravyi}}\ and\ \bibinfo {author} {\bibfnamefont {A.}~\bibnamefont {Kitaev}},\ }\bibfield  {title} {\bibinfo {title} {Universal quantum computation with ideal clifford gates and noisy ancillas},\ }\href {https://doi.org/10.1103/PhysRevA.71.022316} {\bibfield  {journal} {\bibinfo  {journal} {Phys. Rev. A}\ }\textbf {\bibinfo {volume} {71}},\ \bibinfo {pages} {022316} (\bibinfo {year} {2005})}\BibitemShut {NoStop}%
\bibitem [{\citenamefont {Fang}\ and\ \citenamefont {Liu}(2020)}]{fang2020purification}%
  \BibitemOpen
  \bibfield  {author} {\bibinfo {author} {\bibfnamefont {K.}~\bibnamefont {Fang}}\ and\ \bibinfo {author} {\bibfnamefont {Z.-W.}\ \bibnamefont {Liu}},\ }\bibfield  {title} {\bibinfo {title} {No-go theorems for quantum resource purification},\ }\href {https://doi.org/10.1103/PhysRevLett.125.060405} {\bibfield  {journal} {\bibinfo  {journal} {Phys. Rev. Lett.}\ }\textbf {\bibinfo {volume} {125}},\ \bibinfo {pages} {060405} (\bibinfo {year} {2020})}\BibitemShut {NoStop}%
\bibitem [{\citenamefont {Wilde}(2011)}]{wilde2011classical}%
  \BibitemOpen
  \bibfield  {author} {\bibinfo {author} {\bibfnamefont {M.~M.}\ \bibnamefont {Wilde}},\ }\bibfield  {title} {\bibinfo {title} {{From classical to quantum Shannon theory}},\ }\href {https://arxiv.org/abs/1106.1445} {\bibfield  {journal} {\bibinfo  {journal} {arXiv preprint arXiv:1106.1445}\ } (\bibinfo {year} {2011})}\BibitemShut {NoStop}%
\bibitem [{\citenamefont {Horodecki}\ \emph {et~al.}(2009)\citenamefont {Horodecki}, \citenamefont {Horodecki}, \citenamefont {Horodecki},\ and\ \citenamefont {Horodecki}}]{Horodecki2009entanglement}%
  \BibitemOpen
  \bibfield  {author} {\bibinfo {author} {\bibfnamefont {R.}~\bibnamefont {Horodecki}}, \bibinfo {author} {\bibfnamefont {P.}~\bibnamefont {Horodecki}}, \bibinfo {author} {\bibfnamefont {M.}~\bibnamefont {Horodecki}},\ and\ \bibinfo {author} {\bibfnamefont {K.}~\bibnamefont {Horodecki}},\ }\bibfield  {title} {\bibinfo {title} {Quantum entanglement},\ }\href {https://doi.org/10.1103/RevModPhys.81.865} {\bibfield  {journal} {\bibinfo  {journal} {Rev. Mod. Phys.}\ }\textbf {\bibinfo {volume} {81}},\ \bibinfo {pages} {865} (\bibinfo {year} {2009})}\BibitemShut {NoStop}%
\bibitem [{\citenamefont {Chitambar}\ and\ \citenamefont {Gour}(2019)}]{chitambar2019resource}%
  \BibitemOpen
  \bibfield  {author} {\bibinfo {author} {\bibfnamefont {E.}~\bibnamefont {Chitambar}}\ and\ \bibinfo {author} {\bibfnamefont {G.}~\bibnamefont {Gour}},\ }\bibfield  {title} {\bibinfo {title} {Quantum resource theories},\ }\href {https://doi.org/10.1103/RevModPhys.91.025001} {\bibfield  {journal} {\bibinfo  {journal} {Rev. Mod. Phys.}\ }\textbf {\bibinfo {volume} {91}},\ \bibinfo {pages} {025001} (\bibinfo {year} {2019})}\BibitemShut {NoStop}%
\bibitem [{\citenamefont {Liu}\ \emph {et~al.}(2019)\citenamefont {Liu}, \citenamefont {Bu},\ and\ \citenamefont {Takagi}}]{liu19oneshot}%
  \BibitemOpen
  \bibfield  {author} {\bibinfo {author} {\bibfnamefont {Z.-W.}\ \bibnamefont {Liu}}, \bibinfo {author} {\bibfnamefont {K.}~\bibnamefont {Bu}},\ and\ \bibinfo {author} {\bibfnamefont {R.}~\bibnamefont {Takagi}},\ }\bibfield  {title} {\bibinfo {title} {One-shot operational quantum resource theory},\ }\href {https://doi.org/10.1103/PhysRevLett.123.020401} {\bibfield  {journal} {\bibinfo  {journal} {Phys. Rev. Lett.}\ }\textbf {\bibinfo {volume} {123}},\ \bibinfo {pages} {020401} (\bibinfo {year} {2019})}\BibitemShut {NoStop}%
\bibitem [{\citenamefont {Gour}\ \emph {et~al.}(2015)\citenamefont {Gour}, \citenamefont {Müller}, \citenamefont {Narasimhachar}, \citenamefont {Spekkens},\ and\ \citenamefont {Yunger~Halpern}}]{Gour_2015}%
  \BibitemOpen
  \bibfield  {author} {\bibinfo {author} {\bibfnamefont {G.}~\bibnamefont {Gour}}, \bibinfo {author} {\bibfnamefont {M.~P.}\ \bibnamefont {Müller}}, \bibinfo {author} {\bibfnamefont {V.}~\bibnamefont {Narasimhachar}}, \bibinfo {author} {\bibfnamefont {R.~W.}\ \bibnamefont {Spekkens}},\ and\ \bibinfo {author} {\bibfnamefont {N.}~\bibnamefont {Yunger~Halpern}},\ }\bibfield  {title} {\bibinfo {title} {The resource theory of informational nonequilibrium in thermodynamics},\ }\href {http://dx.doi.org/10.1016/j.physrep.2015.04.003} {\bibfield  {journal} {\bibinfo  {journal} {Physics Rep.}\ }\textbf {\bibinfo {volume} {583}},\ \bibinfo {pages} {1–58} (\bibinfo {year} {2015})}\BibitemShut {NoStop}%
\bibitem [{\citenamefont {Fang}\ and\ \citenamefont {Liu}(2022)}]{fang2022purification}%
  \BibitemOpen
  \bibfield  {author} {\bibinfo {author} {\bibfnamefont {K.}~\bibnamefont {Fang}}\ and\ \bibinfo {author} {\bibfnamefont {Z.-W.}\ \bibnamefont {Liu}},\ }\bibfield  {title} {\bibinfo {title} {No-go theorems for quantum resource purification: New approach and channel theory},\ }\href {https://doi.org/10.1103/PRXQuantum.3.010337} {\bibfield  {journal} {\bibinfo  {journal} {PRX Quantum}\ }\textbf {\bibinfo {volume} {3}},\ \bibinfo {pages} {010337} (\bibinfo {year} {2022})}\BibitemShut {NoStop}%
\bibitem [{\citenamefont {Regula}\ and\ \citenamefont {Takagi}(2021)}]{Regula_2021}%
  \BibitemOpen
  \bibfield  {author} {\bibinfo {author} {\bibfnamefont {B.}~\bibnamefont {Regula}}\ and\ \bibinfo {author} {\bibfnamefont {R.}~\bibnamefont {Takagi}},\ }\bibfield  {title} {\bibinfo {title} {Fundamental limitations on distillation of quantum channel resources},\ }\href {http://dx.doi.org/10.1038/s41467-021-24699-0} {\bibfield  {journal} {\bibinfo  {journal} {Nature Comm.}\ }\textbf {\bibinfo {volume} {12}} (\bibinfo {year} {2021})}\BibitemShut {NoStop}%
\bibitem [{\citenamefont {Zang}\ \emph {et~al.}(2025)\citenamefont {Zang}, \citenamefont {Chen}, \citenamefont {Chitambar}, \citenamefont {Suchara},\ and\ \citenamefont {Zhong}}]{zang2025entanglement}%
  \BibitemOpen
  \bibfield  {author} {\bibinfo {author} {\bibfnamefont {A.}~\bibnamefont {Zang}}, \bibinfo {author} {\bibfnamefont {X.}~\bibnamefont {Chen}}, \bibinfo {author} {\bibfnamefont {E.}~\bibnamefont {Chitambar}}, \bibinfo {author} {\bibfnamefont {M.}~\bibnamefont {Suchara}},\ and\ \bibinfo {author} {\bibfnamefont {T.}~\bibnamefont {Zhong}},\ }\bibfield  {title} {\bibinfo {title} {No-go theorems for universal entanglement purification},\ }\href {https://doi.org/10.1103/PhysRevLett.134.190803} {\bibfield  {journal} {\bibinfo  {journal} {Phys. Rev. Lett.}\ }\textbf {\bibinfo {volume} {134}},\ \bibinfo {pages} {190803} (\bibinfo {year} {2025})}\BibitemShut {NoStop}%
\bibitem [{\citenamefont {Li}\ \emph {et~al.}(2024)\citenamefont {Li}, \citenamefont {Fu}, \citenamefont {Isogawa},\ and\ \citenamefont {Chuang}}]{li2024optimal}%
  \BibitemOpen
  \bibfield  {author} {\bibinfo {author} {\bibfnamefont {Z.}~\bibnamefont {Li}}, \bibinfo {author} {\bibfnamefont {H.}~\bibnamefont {Fu}}, \bibinfo {author} {\bibfnamefont {T.}~\bibnamefont {Isogawa}},\ and\ \bibinfo {author} {\bibfnamefont {I.}~\bibnamefont {Chuang}},\ }\bibfield  {title} {\bibinfo {title} {Optimal quantum purity amplification},\ }\href {https://arxiv.org/abs/2409.18167} {\bibfield  {journal} {\bibinfo  {journal} {arXiv preprint arXiv:2409.18167}\ } (\bibinfo {year} {2024})}\BibitemShut {NoStop}%
\bibitem [{\citenamefont {Childs}\ \emph {et~al.}(2025)\citenamefont {Childs}, \citenamefont {Fu}, \citenamefont {Leung}, \citenamefont {Li}, \citenamefont {Ozols},\ and\ \citenamefont {Vyas}}]{Childs2025streamingquantum}%
  \BibitemOpen
  \bibfield  {author} {\bibinfo {author} {\bibfnamefont {A.~M.}\ \bibnamefont {Childs}}, \bibinfo {author} {\bibfnamefont {H.}~\bibnamefont {Fu}}, \bibinfo {author} {\bibfnamefont {D.}~\bibnamefont {Leung}}, \bibinfo {author} {\bibfnamefont {Z.}~\bibnamefont {Li}}, \bibinfo {author} {\bibfnamefont {M.}~\bibnamefont {Ozols}},\ and\ \bibinfo {author} {\bibfnamefont {V.}~\bibnamefont {Vyas}},\ }\bibfield  {title} {\bibinfo {title} {Streaming quantum state purification},\ }\href {https://doi.org/10.22331/q-2025-01-21-1603} {\bibfield  {journal} {\bibinfo  {journal} {{Quantum}}\ }\textbf {\bibinfo {volume} {9}},\ \bibinfo {pages} {1603} (\bibinfo {year} {2025})}\BibitemShut {NoStop}%
\bibitem [{\citenamefont {Lloyd}\ \emph {et~al.}(2014)\citenamefont {Lloyd}, \citenamefont {Mohseni},\ and\ \citenamefont {Rebentrost}}]{Lloyd2014qpca}%
  \BibitemOpen
  \bibfield  {author} {\bibinfo {author} {\bibfnamefont {S.}~\bibnamefont {Lloyd}}, \bibinfo {author} {\bibfnamefont {M.}~\bibnamefont {Mohseni}},\ and\ \bibinfo {author} {\bibfnamefont {P.}~\bibnamefont {Rebentrost}},\ }\bibfield  {title} {\bibinfo {title} {Quantum principal component analysis},\ }\href {https://doi.org/10.1038/nphys3029} {\bibfield  {journal} {\bibinfo  {journal} {Nat. Phys.}\ }\textbf {\bibinfo {volume} {10}},\ \bibinfo {pages} {631} (\bibinfo {year} {2014})}\BibitemShut {NoStop}%
\bibitem [{\citenamefont {Gily{\'e}n}\ and\ \citenamefont {Li}(2019)}]{gilyen2019distributional}%
  \BibitemOpen
  \bibfield  {author} {\bibinfo {author} {\bibfnamefont {A.}~\bibnamefont {Gily{\'e}n}}\ and\ \bibinfo {author} {\bibfnamefont {T.}~\bibnamefont {Li}},\ }\bibfield  {title} {\bibinfo {title} {Distributional property testing in a quantum world},\ }\href {https://arxiv.org/abs/1902.00814} {\bibfield  {journal} {\bibinfo  {journal} {arXiv preprint arXiv:1902.00814}\ } (\bibinfo {year} {2019})}\BibitemShut {NoStop}%
\bibitem [{\citenamefont {Wang}\ and\ \citenamefont {Zhang}(2023)}]{wang2023quantum}%
  \BibitemOpen
  \bibfield  {author} {\bibinfo {author} {\bibfnamefont {Q.}~\bibnamefont {Wang}}\ and\ \bibinfo {author} {\bibfnamefont {Z.}~\bibnamefont {Zhang}},\ }\bibfield  {title} {\bibinfo {title} {Quantum lower bounds by sample-to-query lifting},\ }\href {https://arxiv.org/abs/2308.01794} {\bibfield  {journal} {\bibinfo  {journal} {arXiv preprint arXiv:2308.01794}\ } (\bibinfo {year} {2023})}\BibitemShut {NoStop}%
\bibitem [{\citenamefont {Liu}\ \emph {et~al.}(2024{\natexlab{a}})\citenamefont {Liu}, \citenamefont {Gong}, \citenamefont {Du},\ and\ \citenamefont {Cai}}]{liu2024exponential}%
  \BibitemOpen
  \bibfield  {author} {\bibinfo {author} {\bibfnamefont {Z.}~\bibnamefont {Liu}}, \bibinfo {author} {\bibfnamefont {W.}~\bibnamefont {Gong}}, \bibinfo {author} {\bibfnamefont {Z.}~\bibnamefont {Du}},\ and\ \bibinfo {author} {\bibfnamefont {Z.}~\bibnamefont {Cai}},\ }\bibfield  {title} {\bibinfo {title} {Exponential separations between quantum learning with and without purification},\ }\href {https://arxiv.org/abs/2410.17718} {\bibfield  {journal} {\bibinfo  {journal} {arXiv preprint arXiv:2410.17718}\ } (\bibinfo {year} {2024}{\natexlab{a}})}\BibitemShut {NoStop}%
\bibitem [{\citenamefont {Chen}\ \emph {et~al.}(2024)\citenamefont {Chen}, \citenamefont {Wang},\ and\ \citenamefont {Zhang}}]{chen2024localtestunitarilyinvariant}%
  \BibitemOpen
  \bibfield  {author} {\bibinfo {author} {\bibfnamefont {K.}~\bibnamefont {Chen}}, \bibinfo {author} {\bibfnamefont {Q.}~\bibnamefont {Wang}},\ and\ \bibinfo {author} {\bibfnamefont {Z.}~\bibnamefont {Zhang}},\ }\bibfield  {title} {\bibinfo {title} {Local test for unitarily invariant properties of bipartite quantum states},\ }\href {https://arxiv.org/abs/2404.04599} {\bibfield  {journal} {\bibinfo  {journal} {arXiv preprint arXiv:2404.04599}\ } (\bibinfo {year} {2024})}\BibitemShut {NoStop}%
\bibitem [{\citenamefont {Liu}\ and\ \citenamefont {Yuan}(2020)}]{liu2020channel}%
  \BibitemOpen
  \bibfield  {author} {\bibinfo {author} {\bibfnamefont {Y.}~\bibnamefont {Liu}}\ and\ \bibinfo {author} {\bibfnamefont {X.}~\bibnamefont {Yuan}},\ }\bibfield  {title} {\bibinfo {title} {Operational resource theory of quantum channels},\ }\href {https://doi.org/10.1103/PhysRevResearch.2.012035} {\bibfield  {journal} {\bibinfo  {journal} {Phys. Rev. Res.}\ }\textbf {\bibinfo {volume} {2}},\ \bibinfo {pages} {012035} (\bibinfo {year} {2020})}\BibitemShut {NoStop}%
\bibitem [{\citenamefont {Chiribella}\ \emph {et~al.}(2008{\natexlab{a}})\citenamefont {Chiribella}, \citenamefont {D'Ariano},\ and\ \citenamefont {Perinotti}}]{Chiribella2008superchannel}%
  \BibitemOpen
  \bibfield  {author} {\bibinfo {author} {\bibfnamefont {G.}~\bibnamefont {Chiribella}}, \bibinfo {author} {\bibfnamefont {G.~M.}\ \bibnamefont {D'Ariano}},\ and\ \bibinfo {author} {\bibfnamefont {P.}~\bibnamefont {Perinotti}},\ }\bibfield  {title} {\bibinfo {title} {Transforming quantum operations: Quantum supermaps},\ }\href {https://doi.org/10.1209/0295-5075/83/30004} {\bibfield  {journal} {\bibinfo  {journal} {Europhys. Lett.}\ }\textbf {\bibinfo {volume} {83}},\ \bibinfo {pages} {30004} (\bibinfo {year} {2008}{\natexlab{a}})}\BibitemShut {NoStop}%
\bibitem [{\citenamefont {Chiribella}\ \emph {et~al.}(2008{\natexlab{b}})\citenamefont {Chiribella}, \citenamefont {D'Ariano},\ and\ \citenamefont {Perinotti}}]{chiribella2008comb}%
  \BibitemOpen
  \bibfield  {author} {\bibinfo {author} {\bibfnamefont {G.}~\bibnamefont {Chiribella}}, \bibinfo {author} {\bibfnamefont {G.~M.}\ \bibnamefont {D'Ariano}},\ and\ \bibinfo {author} {\bibfnamefont {P.}~\bibnamefont {Perinotti}},\ }\bibfield  {title} {\bibinfo {title} {Quantum circuit architecture},\ }\href {https://doi.org/10.1103/PhysRevLett.101.060401} {\bibfield  {journal} {\bibinfo  {journal} {Phys. Rev. Lett.}\ }\textbf {\bibinfo {volume} {101}},\ \bibinfo {pages} {060401} (\bibinfo {year} {2008}{\natexlab{b}})}\BibitemShut {NoStop}%
\bibitem [{\citenamefont {Liu}\ \emph {et~al.}(2024{\natexlab{b}})\citenamefont {Liu}, \citenamefont {Xiao},\ and\ \citenamefont {Cai}}]{liu2024non}%
  \BibitemOpen
  \bibfield  {author} {\bibinfo {author} {\bibfnamefont {Z.}~\bibnamefont {Liu}}, \bibinfo {author} {\bibfnamefont {Y.}~\bibnamefont {Xiao}},\ and\ \bibinfo {author} {\bibfnamefont {Z.}~\bibnamefont {Cai}},\ }\bibfield  {title} {\bibinfo {title} {Non-markovian noise suppression simplified through channel representation},\ }\href {https://arxiv.org/abs/2412.11220} {\bibfield  {journal} {\bibinfo  {journal} {arXiv preprint arXiv:2412.11220}\ } (\bibinfo {year} {2024}{\natexlab{b}})}\BibitemShut {NoStop}%
\bibitem [{\citenamefont {Faist}\ \emph {et~al.}(2020)\citenamefont {Faist}, \citenamefont {Nezami}, \citenamefont {Albert}, \citenamefont {Salton}, \citenamefont {Pastawski}, \citenamefont {Hayden},\ and\ \citenamefont {Preskill}}]{faist2020aqec}%
  \BibitemOpen
  \bibfield  {author} {\bibinfo {author} {\bibfnamefont {P.}~\bibnamefont {Faist}}, \bibinfo {author} {\bibfnamefont {S.}~\bibnamefont {Nezami}}, \bibinfo {author} {\bibfnamefont {V.~V.}\ \bibnamefont {Albert}}, \bibinfo {author} {\bibfnamefont {G.}~\bibnamefont {Salton}}, \bibinfo {author} {\bibfnamefont {F.}~\bibnamefont {Pastawski}}, \bibinfo {author} {\bibfnamefont {P.}~\bibnamefont {Hayden}},\ and\ \bibinfo {author} {\bibfnamefont {J.}~\bibnamefont {Preskill}},\ }\bibfield  {title} {\bibinfo {title} {Continuous symmetries and approximate quantum error correction},\ }\href {https://doi.org/10.1103/PhysRevX.10.041018} {\bibfield  {journal} {\bibinfo  {journal} {Phys. Rev. X}\ }\textbf {\bibinfo {volume} {10}},\ \bibinfo {pages} {041018} (\bibinfo {year} {2020})}\BibitemShut {NoStop}%
\bibitem [{\citenamefont {Kong}\ and\ \citenamefont {Liu}(2022)}]{kong2022aqec}%
  \BibitemOpen
  \bibfield  {author} {\bibinfo {author} {\bibfnamefont {L.}~\bibnamefont {Kong}}\ and\ \bibinfo {author} {\bibfnamefont {Z.-W.}\ \bibnamefont {Liu}},\ }\bibfield  {title} {\bibinfo {title} {Near-optimal covariant quantum error-correcting codes from random unitaries with symmetries},\ }\href {https://doi.org/10.1103/PRXQuantum.3.020314} {\bibfield  {journal} {\bibinfo  {journal} {PRX Quantum}\ }\textbf {\bibinfo {volume} {3}},\ \bibinfo {pages} {020314} (\bibinfo {year} {2022})}\BibitemShut {NoStop}%
\bibitem [{\citenamefont {Niwa}\ \emph {et~al.}(2026)\citenamefont {Niwa}, \citenamefont {Yoshida}, \citenamefont {Ono}, \citenamefont {Utsumi}, \citenamefont {Li}, \citenamefont {Yang}, \citenamefont {Takagi},\ and\ \citenamefont {Murao}}]{Ryotaro2026channelpurification}%
  \BibitemOpen
  \bibfield  {author} {\bibinfo {author} {\bibfnamefont {R.}~\bibnamefont {Niwa}}, \bibinfo {author} {\bibfnamefont {S.}~\bibnamefont {Yoshida}}, \bibinfo {author} {\bibfnamefont {K.}~\bibnamefont {Ono}}, \bibinfo {author} {\bibfnamefont {T.}~\bibnamefont {Utsumi}}, \bibinfo {author} {\bibfnamefont {Z.}~\bibnamefont {Li}}, \bibinfo {author} {\bibfnamefont {Y.}~\bibnamefont {Yang}}, \bibinfo {author} {\bibfnamefont {R.}~\bibnamefont {Takagi}},\ and\ \bibinfo {author} {\bibfnamefont {M.}~\bibnamefont {Murao}},\ }\bibfield  {title} {\bibinfo {title} {Scaling-optimal purification of noisy qubit unitary channels},\ }\href {https://arxiv.org/abs/2606.12394} {\bibfield  {journal} {\bibinfo  {journal} {arXiv preprint arXiv:2606.12394}\ } (\bibinfo {year} {2026})}\BibitemShut {NoStop}%
\bibitem [{\citenamefont {Braunstein}\ and\ \citenamefont {Caves}(1994)}]{braunstein1994geometry}%
  \BibitemOpen
  \bibfield  {author} {\bibinfo {author} {\bibfnamefont {S.~L.}\ \bibnamefont {Braunstein}}\ and\ \bibinfo {author} {\bibfnamefont {C.~M.}\ \bibnamefont {Caves}},\ }\bibfield  {title} {\bibinfo {title} {Statistical distance and the geometry of quantum states},\ }\href {https://doi.org/10.1103/PhysRevLett.72.3439} {\bibfield  {journal} {\bibinfo  {journal} {Phys. Rev. Lett.}\ }\textbf {\bibinfo {volume} {72}},\ \bibinfo {pages} {3439} (\bibinfo {year} {1994})}\BibitemShut {NoStop}%
\bibitem [{\citenamefont {Tsang}\ \emph {et~al.}(2020)\citenamefont {Tsang}, \citenamefont {Albarelli},\ and\ \citenamefont {Datta}}]{Tsang2020semi}%
  \BibitemOpen
  \bibfield  {author} {\bibinfo {author} {\bibfnamefont {M.}~\bibnamefont {Tsang}}, \bibinfo {author} {\bibfnamefont {F.}~\bibnamefont {Albarelli}},\ and\ \bibinfo {author} {\bibfnamefont {A.}~\bibnamefont {Datta}},\ }\bibfield  {title} {\bibinfo {title} {Quantum semiparametric estimation},\ }\href {https://doi.org/10.1103/PhysRevX.10.031023} {\bibfield  {journal} {\bibinfo  {journal} {Phys. Rev. X}\ }\textbf {\bibinfo {volume} {10}},\ \bibinfo {pages} {031023} (\bibinfo {year} {2020})}\BibitemShut {NoStop}%
\bibitem [{\citenamefont {Low}\ and\ \citenamefont {Chuang}(2019)}]{Low2019hamiltonian}%
  \BibitemOpen
  \bibfield  {author} {\bibinfo {author} {\bibfnamefont {G.~H.}\ \bibnamefont {Low}}\ and\ \bibinfo {author} {\bibfnamefont {I.~L.}\ \bibnamefont {Chuang}},\ }\bibfield  {title} {\bibinfo {title} {Hamiltonian {s}imulation by {q}ubitization},\ }\href {https://doi.org/10.22331/q-2019-07-12-163} {\bibfield  {journal} {\bibinfo  {journal} {{Quantum}}\ }\textbf {\bibinfo {volume} {3}},\ \bibinfo {pages} {163} (\bibinfo {year} {2019})}\BibitemShut {NoStop}%
\bibitem [{\citenamefont {Martyn}\ \emph {et~al.}(2021)\citenamefont {Martyn}, \citenamefont {Rossi}, \citenamefont {Tan},\ and\ \citenamefont {Chuang}}]{martyn2021grand}%
  \BibitemOpen
  \bibfield  {author} {\bibinfo {author} {\bibfnamefont {J.~M.}\ \bibnamefont {Martyn}}, \bibinfo {author} {\bibfnamefont {Z.~M.}\ \bibnamefont {Rossi}}, \bibinfo {author} {\bibfnamefont {A.~K.}\ \bibnamefont {Tan}},\ and\ \bibinfo {author} {\bibfnamefont {I.~L.}\ \bibnamefont {Chuang}},\ }\bibfield  {title} {\bibinfo {title} {Grand unification of quantum algorithms},\ }\href {https://doi.org/10.1103/PRXQuantum.2.040203} {\bibfield  {journal} {\bibinfo  {journal} {PRX Quantum}\ }\textbf {\bibinfo {volume} {2}},\ \bibinfo {pages} {040203} (\bibinfo {year} {2021})}\BibitemShut {NoStop}%
\bibitem [{\citenamefont {Gily\'en}\ \emph {et~al.}(2022)\citenamefont {Gily\'en}, \citenamefont {Lloyd}, \citenamefont {Marvian}, \citenamefont {Quek},\ and\ \citenamefont {Wilde}}]{gilyen2022petz}%
  \BibitemOpen
  \bibfield  {author} {\bibinfo {author} {\bibfnamefont {A.}~\bibnamefont {Gily\'en}}, \bibinfo {author} {\bibfnamefont {S.}~\bibnamefont {Lloyd}}, \bibinfo {author} {\bibfnamefont {I.}~\bibnamefont {Marvian}}, \bibinfo {author} {\bibfnamefont {Y.}~\bibnamefont {Quek}},\ and\ \bibinfo {author} {\bibfnamefont {M.~M.}\ \bibnamefont {Wilde}},\ }\bibfield  {title} {\bibinfo {title} {Quantum algorithm for petz recovery channels and pretty good measurements},\ }\href {https://doi.org/10.1103/PhysRevLett.128.220502} {\bibfield  {journal} {\bibinfo  {journal} {Phys. Rev. Lett.}\ }\textbf {\bibinfo {volume} {128}},\ \bibinfo {pages} {220502} (\bibinfo {year} {2022})}\BibitemShut {NoStop}%
\bibitem [{\citenamefont {Wang}\ \emph {et~al.}(2023)\citenamefont {Wang}, \citenamefont {Zhang}, \citenamefont {Chen}, \citenamefont {Guan}, \citenamefont {Fang}, \citenamefont {Liu},\ and\ \citenamefont {Ying}}]{wang2023fidelity}%
  \BibitemOpen
  \bibfield  {author} {\bibinfo {author} {\bibfnamefont {Q.}~\bibnamefont {Wang}}, \bibinfo {author} {\bibfnamefont {Z.}~\bibnamefont {Zhang}}, \bibinfo {author} {\bibfnamefont {K.}~\bibnamefont {Chen}}, \bibinfo {author} {\bibfnamefont {J.}~\bibnamefont {Guan}}, \bibinfo {author} {\bibfnamefont {W.}~\bibnamefont {Fang}}, \bibinfo {author} {\bibfnamefont {J.}~\bibnamefont {Liu}},\ and\ \bibinfo {author} {\bibfnamefont {M.}~\bibnamefont {Ying}},\ }\bibfield  {title} {\bibinfo {title} {Quantum algorithm for fidelity estimation},\ }\href {https://doi.org/10.1109/TIT.2022.3203985} {\bibfield  {journal} {\bibinfo  {journal} {IEEE Trans. Inf. Th.}\ }\textbf {\bibinfo {volume} {69}},\ \bibinfo {pages} {273} (\bibinfo {year} {2023})}\BibitemShut {NoStop}%
\bibitem [{\citenamefont {Wang}\ \emph {et~al.}(2024)\citenamefont {Wang}, \citenamefont {Guan}, \citenamefont {Liu}, \citenamefont {Zhang},\ and\ \citenamefont {Ying}}]{wang2024entropy}%
  \BibitemOpen
  \bibfield  {author} {\bibinfo {author} {\bibfnamefont {Q.}~\bibnamefont {Wang}}, \bibinfo {author} {\bibfnamefont {J.}~\bibnamefont {Guan}}, \bibinfo {author} {\bibfnamefont {J.}~\bibnamefont {Liu}}, \bibinfo {author} {\bibfnamefont {Z.}~\bibnamefont {Zhang}},\ and\ \bibinfo {author} {\bibfnamefont {M.}~\bibnamefont {Ying}},\ }\bibfield  {title} {\bibinfo {title} {New quantum algorithms for computing quantum entropies and distances},\ }\href {https://ieeexplore.ieee.org/abstract/document/10530179/} {\bibfield  {journal} {\bibinfo  {journal} {IEEE Trans. Inf. Th.}\ }\textbf {\bibinfo {volume} {70}},\ \bibinfo {pages} {5653} (\bibinfo {year} {2024})}\BibitemShut {NoStop}%
\bibitem [{\citenamefont {Han}\ \emph {et~al.}(2015)\citenamefont {Han}, \citenamefont {Jiao},\ and\ \citenamefont {Weissman}}]{Han2015Minimax}%
  \BibitemOpen
  \bibfield  {author} {\bibinfo {author} {\bibfnamefont {Y.}~\bibnamefont {Han}}, \bibinfo {author} {\bibfnamefont {J.}~\bibnamefont {Jiao}},\ and\ \bibinfo {author} {\bibfnamefont {T.}~\bibnamefont {Weissman}},\ }\bibfield  {title} {\bibinfo {title} {Minimax estimation of discrete distributions},\ }in\ \href {https://doi.org/10.1109/ISIT.2015.7282864} {\emph {\bibinfo {booktitle} {2015 IEEE International Symposium on Information Theory (ISIT)}}}\ (\bibinfo {year} {2015})\ pp.\ \bibinfo {pages} {2291--2295}\BibitemShut {NoStop}%
\bibitem [{\citenamefont {Cerf}\ \emph {et~al.}(2007)\citenamefont {Cerf}, \citenamefont {Leuchs},\ and\ \citenamefont {Polzik}}]{cerf2007quantum}%
  \BibitemOpen
  \bibfield  {author} {\bibinfo {author} {\bibfnamefont {N.~J.}\ \bibnamefont {Cerf}}, \bibinfo {author} {\bibfnamefont {G.}~\bibnamefont {Leuchs}},\ and\ \bibinfo {author} {\bibfnamefont {E.~S.}\ \bibnamefont {Polzik}},\ }\href {https://books.google.ae/books?hl=zh-CN&lr=&id=VK5gDQAAQBAJ&oi=fnd&pg=PR7&dq=Quantum+Information+with+Continous+Variables+of+Atoms+and+Light&ots=6rcW5SVw9c&sig=riDzVFi0x0Iz_9QXOjG0kY9DW_g&redir_esc=y#v=onepage&q=Quantum%20Information%20with%20Continous%20Variables%20of%20Atoms%20and%20Light&f=false} {\emph {\bibinfo {title} {Quantum information with continuous variables of atoms and light}}}\ (\bibinfo  {publisher} {World Scientific},\ \bibinfo {year} {2007})\BibitemShut {NoStop}%
\bibitem [{\citenamefont {Weedbrook}\ \emph {et~al.}(2012)\citenamefont {Weedbrook}, \citenamefont {Pirandola}, \citenamefont {Garcia-Patron}, \citenamefont {Cerf}, \citenamefont {Ralph}, \citenamefont {Shapiro},\ and\ \citenamefont {Lloyd}}]{GaussianQuantumInfo}%
  \BibitemOpen
  \bibfield  {author} {\bibinfo {author} {\bibfnamefont {C.}~\bibnamefont {Weedbrook}}, \bibinfo {author} {\bibfnamefont {S.}~\bibnamefont {Pirandola}}, \bibinfo {author} {\bibfnamefont {R.}~\bibnamefont {Garcia-Patron}}, \bibinfo {author} {\bibfnamefont {N.~J.}\ \bibnamefont {Cerf}}, \bibinfo {author} {\bibfnamefont {T.~C.}\ \bibnamefont {Ralph}}, \bibinfo {author} {\bibfnamefont {J.~H.}\ \bibnamefont {Shapiro}},\ and\ \bibinfo {author} {\bibfnamefont {S.}~\bibnamefont {Lloyd}},\ }\bibfield  {title} {\bibinfo {title} {Gaussian quantum information},\ }\href {https://doi.org/10.1103/RevModPhys.84.621} {\bibfield  {journal} {\bibinfo  {journal} {Rev. Mod. Phys.}\ }\textbf {\bibinfo {volume} {84}},\ \bibinfo {pages} {621} (\bibinfo {year} {2012})}\BibitemShut {NoStop}%
\bibitem [{\citenamefont {Caruso}\ \emph {et~al.}(2011)\citenamefont {Caruso}, \citenamefont {Eisert}, \citenamefont {Giovannetti},\ and\ \citenamefont {Holevo}}]{caruso2011optimaldilation}%
  \BibitemOpen
  \bibfield  {author} {\bibinfo {author} {\bibfnamefont {F.}~\bibnamefont {Caruso}}, \bibinfo {author} {\bibfnamefont {J.}~\bibnamefont {Eisert}}, \bibinfo {author} {\bibfnamefont {V.}~\bibnamefont {Giovannetti}},\ and\ \bibinfo {author} {\bibfnamefont {A.~S.}\ \bibnamefont {Holevo}},\ }\bibfield  {title} {\bibinfo {title} {Optimal unitary dilation for bosonic gaussian channels},\ }\href {https://doi.org/10.1103/PhysRevA.84.022306} {\bibfield  {journal} {\bibinfo  {journal} {Phys. Rev. A}\ }\textbf {\bibinfo {volume} {84}},\ \bibinfo {pages} {022306} (\bibinfo {year} {2011})}\BibitemShut {NoStop}%
\bibitem [{\citenamefont {Eisert}\ \emph {et~al.}(2002)\citenamefont {Eisert}, \citenamefont {Scheel},\ and\ \citenamefont {Plenio}}]{PhysRevLett.89.137903}%
  \BibitemOpen
  \bibfield  {author} {\bibinfo {author} {\bibfnamefont {J.}~\bibnamefont {Eisert}}, \bibinfo {author} {\bibfnamefont {S.}~\bibnamefont {Scheel}},\ and\ \bibinfo {author} {\bibfnamefont {M.~B.}\ \bibnamefont {Plenio}},\ }\bibfield  {title} {\bibinfo {title} {{Distilling Gaussian states with Gaussian operations is impossible}},\ }\href {https://doi.org/10.1103/PhysRevLett.89.137903} {\bibfield  {journal} {\bibinfo  {journal} {Phys. Rev. Lett.}\ }\textbf {\bibinfo {volume} {89}},\ \bibinfo {pages} {137903} (\bibinfo {year} {2002})}\BibitemShut {NoStop}%
\bibitem [{\citenamefont {Giedke}\ and\ \citenamefont {Ignacio~Cirac}(2002)}]{PhysRevA.66.032316}%
  \BibitemOpen
  \bibfield  {author} {\bibinfo {author} {\bibfnamefont {G.}~\bibnamefont {Giedke}}\ and\ \bibinfo {author} {\bibfnamefont {J.}~\bibnamefont {Ignacio~Cirac}},\ }\bibfield  {title} {\bibinfo {title} {{Characterization of Gaussian operations and distillation of Gaussian states}},\ }\href {https://doi.org/10.1103/PhysRevA.66.032316} {\bibfield  {journal} {\bibinfo  {journal} {Phys. Rev. A}\ }\textbf {\bibinfo {volume} {66}},\ \bibinfo {pages} {032316} (\bibinfo {year} {2002})}\BibitemShut {NoStop}%
\bibitem [{\citenamefont {Fiur\'a\ifmmode~\check{s}\else \v{s}\fi{}ek}(2002)}]{PhysRevLett.89.137904}%
  \BibitemOpen
  \bibfield  {author} {\bibinfo {author} {\bibfnamefont {J.}~\bibnamefont {Fiur\'a\ifmmode~\check{s}\else \v{s}\fi{}ek}},\ }\bibfield  {title} {\bibinfo {title} {{Gaussian transformations and distillation of entangled Gaussian states}},\ }\href {https://doi.org/10.1103/PhysRevLett.89.137904} {\bibfield  {journal} {\bibinfo  {journal} {Phys. Rev. Lett.}\ }\textbf {\bibinfo {volume} {89}},\ \bibinfo {pages} {137904} (\bibinfo {year} {2002})}\BibitemShut {NoStop}%
\bibitem [{\citenamefont {Giedke}\ \emph {et~al.}(2003)\citenamefont {Giedke}, \citenamefont {Eisert}, \citenamefont {Cirac},\ and\ \citenamefont {Plenio}}]{GiedkeSchmidt}%
  \BibitemOpen
  \bibfield  {author} {\bibinfo {author} {\bibfnamefont {G.}~\bibnamefont {Giedke}}, \bibinfo {author} {\bibfnamefont {J.}~\bibnamefont {Eisert}}, \bibinfo {author} {\bibfnamefont {J.~I.}\ \bibnamefont {Cirac}},\ and\ \bibinfo {author} {\bibfnamefont {M.~B.}\ \bibnamefont {Plenio}},\ }\bibfield  {title} {\bibinfo {title} {Entanglement transformations of pure gaussian states},\ }\href {https://doi.org/10.26421/QIC3.3-3} {\bibfield  {journal} {\bibinfo  {journal} {Quant. Inf. Comp.}\ }\textbf {\bibinfo {volume} {3}},\ \bibinfo {pages} {211} (\bibinfo {year} {2003})}\BibitemShut {NoStop}%
\bibitem [{\citenamefont {Masanes}\ and\ \citenamefont {Oppenheim}(2017)}]{ThirdLaw}%
  \BibitemOpen
  \bibfield  {author} {\bibinfo {author} {\bibfnamefont {L.}~\bibnamefont {Masanes}}\ and\ \bibinfo {author} {\bibfnamefont {J.}~\bibnamefont {Oppenheim}},\ }\bibfield  {title} {\bibinfo {title} {A general derivation and quantification of the third law of thermodynamics},\ }\href {https://doi.org/10.1038/ncomms14538} {\bibfield  {journal} {\bibinfo  {journal} {Nature Comm.}\ }\textbf {\bibinfo {volume} {8}},\ \bibinfo {pages} {14538} (\bibinfo {year} {2017})}\BibitemShut {NoStop}%
\bibitem [{\citenamefont {Pelecanos}\ \emph {et~al.}(2025)\citenamefont {Pelecanos}, \citenamefont {Spilecki}, \citenamefont {Tang},\ and\ \citenamefont {Wright}}]{pelecanos2025mixed}%
  \BibitemOpen
  \bibfield  {author} {\bibinfo {author} {\bibfnamefont {A.}~\bibnamefont {Pelecanos}}, \bibinfo {author} {\bibfnamefont {J.}~\bibnamefont {Spilecki}}, \bibinfo {author} {\bibfnamefont {E.}~\bibnamefont {Tang}},\ and\ \bibinfo {author} {\bibfnamefont {J.}~\bibnamefont {Wright}},\ }\bibfield  {title} {\bibinfo {title} {Mixed state tomography reduces to pure state tomography},\ }\href {https://arxiv.org/abs/2511.15806} {\bibfield  {journal} {\bibinfo  {journal} {arXiv preprint arXiv:2511.15806}\ } (\bibinfo {year} {2025})}\BibitemShut {NoStop}%
\bibitem [{\citenamefont {Tang}\ \emph {et~al.}(2025)\citenamefont {Tang}, \citenamefont {Wright},\ and\ \citenamefont {Zhandry}}]{tang2025conjugate}%
  \BibitemOpen
  \bibfield  {author} {\bibinfo {author} {\bibfnamefont {E.}~\bibnamefont {Tang}}, \bibinfo {author} {\bibfnamefont {J.}~\bibnamefont {Wright}},\ and\ \bibinfo {author} {\bibfnamefont {M.}~\bibnamefont {Zhandry}},\ }\bibfield  {title} {\bibinfo {title} {Conjugate queries can help},\ }\href {https://arxiv.org/abs/2510.07622} {\bibfield  {journal} {\bibinfo  {journal} {arXiv preprint arXiv:2510.07622}\ } (\bibinfo {year} {2025})}\BibitemShut {NoStop}%
\bibitem [{\citenamefont {Mele}\ \emph {et~al.}(2025)\citenamefont {Mele}, \citenamefont {Girardi}, \citenamefont {Chen}, \citenamefont {Fanizza},\ and\ \citenamefont {Lami}}]{mele2025random}%
  \BibitemOpen
  \bibfield  {author} {\bibinfo {author} {\bibfnamefont {F.~A.}\ \bibnamefont {Mele}}, \bibinfo {author} {\bibfnamefont {F.}~\bibnamefont {Girardi}}, \bibinfo {author} {\bibfnamefont {S.}~\bibnamefont {Chen}}, \bibinfo {author} {\bibfnamefont {M.}~\bibnamefont {Fanizza}},\ and\ \bibinfo {author} {\bibfnamefont {L.}~\bibnamefont {Lami}},\ }\bibfield  {title} {\bibinfo {title} {{Random purification channel for passive Gaussian bosons}},\ }\href {https://arxiv.org/abs/2512.16878} {\bibfield  {journal} {\bibinfo  {journal} {arXiv preprint arXiv:2512.16878}\ } (\bibinfo {year} {2025})}\BibitemShut {NoStop}%
\bibitem [{\citenamefont {Girardi}\ \emph {et~al.}(2025)\citenamefont {Girardi}, \citenamefont {Mele},\ and\ \citenamefont {Lami}}]{girardi2025random}%
  \BibitemOpen
  \bibfield  {author} {\bibinfo {author} {\bibfnamefont {F.}~\bibnamefont {Girardi}}, \bibinfo {author} {\bibfnamefont {F.~A.}\ \bibnamefont {Mele}},\ and\ \bibinfo {author} {\bibfnamefont {L.}~\bibnamefont {Lami}},\ }\bibfield  {title} {\bibinfo {title} {Random purification channel made simple},\ }\href {https://arxiv.org/abs/2511.23451} {\bibfield  {journal} {\bibinfo  {journal} {arXiv preprint arXiv:2511.23451}\ } (\bibinfo {year} {2025})}\BibitemShut {NoStop}%
\bibitem [{\citenamefont {Yoshida}\ \emph {et~al.}(2025)\citenamefont {Yoshida}, \citenamefont {Niwa},\ and\ \citenamefont {Murao}}]{yoshida2025random}%
  \BibitemOpen
  \bibfield  {author} {\bibinfo {author} {\bibfnamefont {S.}~\bibnamefont {Yoshida}}, \bibinfo {author} {\bibfnamefont {R.}~\bibnamefont {Niwa}},\ and\ \bibinfo {author} {\bibfnamefont {M.}~\bibnamefont {Murao}},\ }\bibfield  {title} {\bibinfo {title} {Random dilation superchannel},\ }\href {https://arxiv.org/abs/2512.21260} {\bibfield  {journal} {\bibinfo  {journal} {arXiv preprint arXiv:2512.21260}\ } (\bibinfo {year} {2025})}\BibitemShut {NoStop}%
\bibitem [{\citenamefont {Mele}\ and\ \citenamefont {Bittel}(2025)}]{mele2025optimal}%
  \BibitemOpen
  \bibfield  {author} {\bibinfo {author} {\bibfnamefont {A.~A.}\ \bibnamefont {Mele}}\ and\ \bibinfo {author} {\bibfnamefont {L.}~\bibnamefont {Bittel}},\ }\bibfield  {title} {\bibinfo {title} {Optimal learning of quantum channels in diamond distance},\ }\href {https://arxiv.org/abs/2512.10214} {\bibfield  {journal} {\bibinfo  {journal} {arXiv preprint arXiv:2512.10214}\ } (\bibinfo {year} {2025})}\BibitemShut {NoStop}%
\bibitem [{\citenamefont {Bittel}\ \emph {et~al.}(2025)\citenamefont {Bittel}, \citenamefont {Mele}, \citenamefont {Mele}, \citenamefont {Tirone},\ and\ \citenamefont {Lami}}]{Bittel2025optimalestimatesof}%
  \BibitemOpen
  \bibfield  {author} {\bibinfo {author} {\bibfnamefont {L.}~\bibnamefont {Bittel}}, \bibinfo {author} {\bibfnamefont {F.~A.}\ \bibnamefont {Mele}}, \bibinfo {author} {\bibfnamefont {A.~A.}\ \bibnamefont {Mele}}, \bibinfo {author} {\bibfnamefont {S.}~\bibnamefont {Tirone}},\ and\ \bibinfo {author} {\bibfnamefont {L.}~\bibnamefont {Lami}},\ }\bibfield  {title} {\bibinfo {title} {Optimal estimates of trace distance between bosonic {G}aussian states and applications to learning},\ }\href {https://doi.org/10.22331/q-2025-06-12-1769} {\bibfield  {journal} {\bibinfo  {journal} {{Quantum}}\ }\textbf {\bibinfo {volume} {9}},\ \bibinfo {pages} {1769} (\bibinfo {year} {2025})}\BibitemShut {NoStop}%
\end{thebibliography}

%

\end{document}